\documentclass[journal,onecolumn,11pt,twoside]{IEEEtran}
\linespread{1.5}

\usepackage{epsfig,graphics,subfigure}
\usepackage{amsmath,amssymb,amsthm,multirow,balance,dsfont,hyperref}
\usepackage{cite,cleveref}

\newtheorem{theorem}{Theorem}
\newtheorem{lemma}{Lemma}

\newtheorem{assumption}{Assumption}

\usepackage{color}



\newcommand{\ba}{\boldsymbol{a}}
\newcommand{\bc}{\boldsymbol{c}}
\newcommand{\bs}{\boldsymbol{s}}

\newcommand{\bw}{\boldsymbol{w}}

\newcommand{\bx}{\boldsymbol{x}}

\newcommand{\bpsi}{\boldsymbol{\psi}}
\newcommand{\bphi}{\boldsymbol{\phi}}

\newcommand{\bH}{\boldsymbol{H}}

\newcommand{\bSig}{\boldsymbol{\Sigma}}


\newcommand{\cD}{\mathcal{D}}
\newcommand{\cE}{\mathcal{E}}

\newcommand{\cG}{\mathcal{G}}
\newcommand{\cH}{\mathcal{H}}

\newcommand{\cJ}{\mathcal{J}}
\newcommand{\cK}{\mathcal{K}}
\newcommand{\cL}{\mathcal{L}}

\newcommand{\cN}{\mathcal{N}}
\newcommand{\cP}{\mathcal{P}}
\newcommand{\cQ}{\mathcal{Q}}
\newcommand{\cT}{\mathcal{T}}

\newcommand{\cV}{\mathcal{V}}

\newcommand{\cw}{{\scriptstyle\mathcal{W}}}
\newcommand{\cx}{{\scriptstyle\mathcal{X}}}
\newcommand{\cy}{{\scriptstyle\mathcal{Y}}}


\newcommand{\bcH}{\boldsymbol{\cal{H}}}
\newcommand{\bcF}{\boldsymbol{\cal{F}}}
\newcommand{\bcw}{\boldsymbol{\cw}}


\newcommand{\wt}{\widetilde w}
\newcommand{\cwt}{\widetilde\cw}

\newcommand{\cwb}{\overline{\cw}}
\newcommand{\wb}{\overline{w}}
\newcommand{\cHb}{\overline{\mathcal{H}}}

\newcommand{\expec}{\mathbb{E}}

\newcommand{\col}{\text{col}}

\newcommand{\diag}{\text{diag}}
\newcommand{\sign}{\text{sign}}

\begin{document}


\title{{Learning} over Multitask Graphs --\\  { Part I: Stability Analysis}}
\author{{\normalsize{Roula Nassif, \IEEEmembership{Member, IEEE}, Stefan Vlaski, \IEEEmembership{Member, IEEE}, \\
C\'edric Richard, \IEEEmembership{Senior Member, IEEE}, Ali H. Sayed, \IEEEmembership{Fellow Member, IEEE}}}\\
\thanks{The work of A. H. Sayed was supported in part by NSF grants CCF-1524250 and ECCS-1407712. {A short  version of this work appeared in the conference publication~\cite{nassif2018distributed}.}

This work was done while R. Nassif was a post-doc at EPFL. She is now with the American University of Beirut, Lebanon (e-mail: roula.nassif@aub.edu.lb). S. Vlaski and A. H. Sayed are with Institute
of Electrical Engineering, EPFL, Switzerland (e-mail: stefan.vlaski,ali.sayed@epfl.ch). C. Richard is with Universit\'e de Nice Sophia-Antipolis, France (e-mail: cedric.richard@unice.fr).
}
}

%

\maketitle

\begin{abstract}
This paper formulates a multitask optimization problem where agents in the network have individual objectives to meet, or individual parameter vectors to estimate, subject to a smoothness condition over the graph. The smoothness condition softens the transition in the tasks among adjacent nodes and allows incorporating information about the graph structure into the solution of the inference problem. A diffusion strategy is devised that responds to streaming data and employs stochastic approximations in place of actual gradient vectors, which are generally unavailable. The approach relies on minimizing a global cost consisting of the aggregate sum of individual costs regularized by a term that promotes smoothness. We show {in this Part I of the work}, under conditions on the step-size parameter, that the adaptive strategy induces a contraction mapping and leads to small estimation errors on the order of the small step-size. The results {in the accompanying Part II will} reveal explicitly the influence of the network topology and the regularization strength on the network performance and will provide insights into the design of effective multitask strategies for distributed inference over networks.
\end{abstract}

\begin{IEEEkeywords}
Multitask distributed inference, diffusion strategy, smoothness prior,  graph Laplacian regularization, gradient noise, {stability analysis}.
\end{IEEEkeywords}

\newpage
\section{Introduction}

Distributed inference allows a collection of interconnected agents to perform parameter estimation tasks from streaming data by relying solely on local computations and interactions with immediate neighbors. Most prior literature focuses on single-task problems, where agents with separable objective functions need to agree on a common parameter vector corresponding to the minimizer of an aggregate sum of individual costs~\cite{bertsekas1997new,olfati2007consensus,dimakis2010gossip,ram2010distributed,chen2013distributed,sayed2014adaptation,chen2015learning,chen2015learning2,sayed2014adaptive,vlaski2016diffusion}. Many network applications require more complex models and flexible algorithms than single-task implementations since their agents may need to estimate and track multiple objectives simultaneously~\cite{platachaves2017heterogeneous,chen2014multitask,nassif2016proximal,cao2017decentralized,eksin2012distributed,hallac2015network,kekatos2013distributed,platachaves2015distributed,alghunaim2017decentralized,nassif2017diffusion,chen2014diffusion}. Networks of this kind are referred to as multitask networks. Although agents may generally have distinct though related tasks to perform, they may still be able to capitalize on inductive transfer between them to improve their performance.

Based on {the type of} prior information {that may be available about how the tasks are related to each other,} multitask learning algorithms {can be} derived by translating the prior information into constraints on the parameter vectors to be inferred~\cite{platachaves2017heterogeneous,chen2014multitask,nassif2016proximal,cao2017decentralized,eksin2012distributed,hallac2015network,kekatos2013distributed,platachaves2015distributed,alghunaim2017decentralized,nassif2017diffusion,chen2014diffusion}. For example, in~\cite{kekatos2013distributed,platachaves2015distributed,alghunaim2017decentralized}, distributed strategies are developed under the assumption that the parameter vectors across the agents {overlap partially}. A more general scenario is considered in~\cite{nassif2017diffusion} where it is assumed that the tasks across the agents are locally coupled {through} linear equality constraints. In~\cite{chen2014diffusion}, the parameter space is decomposed into two orthogonal subspaces, with one of the subspaces being common to all agents. There is yet another useful way to model relationships among tasks, namely, to formulate optimization problems with appropriate regularization terms encoding {these} relationships~\cite{chen2014multitask,nassif2016proximal,cao2017decentralized,eksin2012distributed,hallac2015network}. For example, the strategy developed in~\cite{chen2014multitask} adds squared $\ell_2$-norm co-regularizers to the mean-square-error criterion to promote task similarities, while the strategy in~\cite{nassif2016proximal} adds $\ell_1$-norm co-regularizers to promote piece-wise constant transitions.

In this {paper, and the accompanying Part II~\cite{nassif2018diffusion},} we consider multitask inference problems where each agent in the network seeks to minimize an individual cost expressed as the expectation of some loss function. The minimizers of the individual costs are assumed to vary smoothly on the topology captured by the graph Laplacian matrix. The smoothness property softens the transition in the tasks among adjacent nodes and allows incorporating information about the graph structure into the solution of the inference problem. In order to exploit the smoothness prior, we formulate the inference problem as the minimization of the aggregate sum of individual costs regularized by a term  promoting smoothness, known as the graph-Laplacian regularizer~\cite{zhou2004regularization,shuman2013emerging}. A diffusion strategy is devised that responds to streaming data and employs stochastic approximations in place of actual gradient vectors, which are generally unavailable. We show {in this Part I of the work}, under conditions on the step-size learning parameter $\mu$, that the adaptive strategy induces a contraction mapping and that despite gradient noise, it is able to converge in the mean-square-error sense within $O(\mu)$ from the solution of the regularized problem, for sufficiently small $\mu$. The analysis {in the current part} also reveals how the regularization strength $\eta$ can steer the convergence point of the network toward many modes starting from the non-cooperative mode where each agent converges to the minimizer of its individual cost and ending with the single-task mode where all agents converge to a common parameter vector corresponding to the minimizer of the aggregate sum of individual costs. We shall also derive {in Part II~\cite{nassif2018diffusion}}  a closed-form expression for the steady-state network mean-square-error relative to  the minimizer of the regularized cost. This closed form expression {will reveal} explicitly the influence of the regularization strength, network topology, gradient noise, and data characteristics, on the network performance. {Additionally,} a closed-form expression for the steady-state network mean-square-error relative to  the minimizers of the individual costs {will be} {also derived} {in Part II~\cite{nassif2018diffusion}}. This expression {will provide} insights into the design of effective multitask strategies for distributed inference over networks.

There have been many works in the literature studying distributed multitask adaptive strategies and their convergence behavior. Nevertheless, with few exceptions~\cite{alghunaim2017decentralized}, most of these works {focus} on mean-square-error costs. This paper, {and the accompanying Part II~\cite{nassif2018diffusion},} generalize distributed multitask inference over networks and applies it to a wide class of individual costs. Furthermore, previous works in this domain tend to show the benefit of multitask learning empirically by simulations. Following some careful and demanding analysis, we establish in {Part II~\cite{nassif2018diffusion}, which builds on the results of this Part I,} a useful expression for the network steady-state performance. This expression provides insights into the learning behavior of multitask networks and clarifies how multitask distributed learning may improve the network performance.

\noindent\textbf{Notation.} All vectors are column vectors. Random quantities are denoted in boldface. Matrices are denoted in capital letters while vectors and scalars are denoted in lower-case letters. The operator $\preceq$ denotes an element-wise inequality; i.e., $a\preceq b$ implies that each entry of the vector $a$ is less than or equal to the corresponding entry of $b$. The symbol $\diag\{\cdot\}$ forms a matrix from block arguments by placing each block immediately below and to the right of its predecessor. The operator $\col\{\cdot\}$ stacks the column vector entries on top of each other. The symbol $\otimes$ denotes the Kronecker product. 

\section{Distributed inference under smoothness priors}

\subsection{Problem formulation and adaptive strategy}
\begin{figure}
\centering
\includegraphics[scale=0.435]{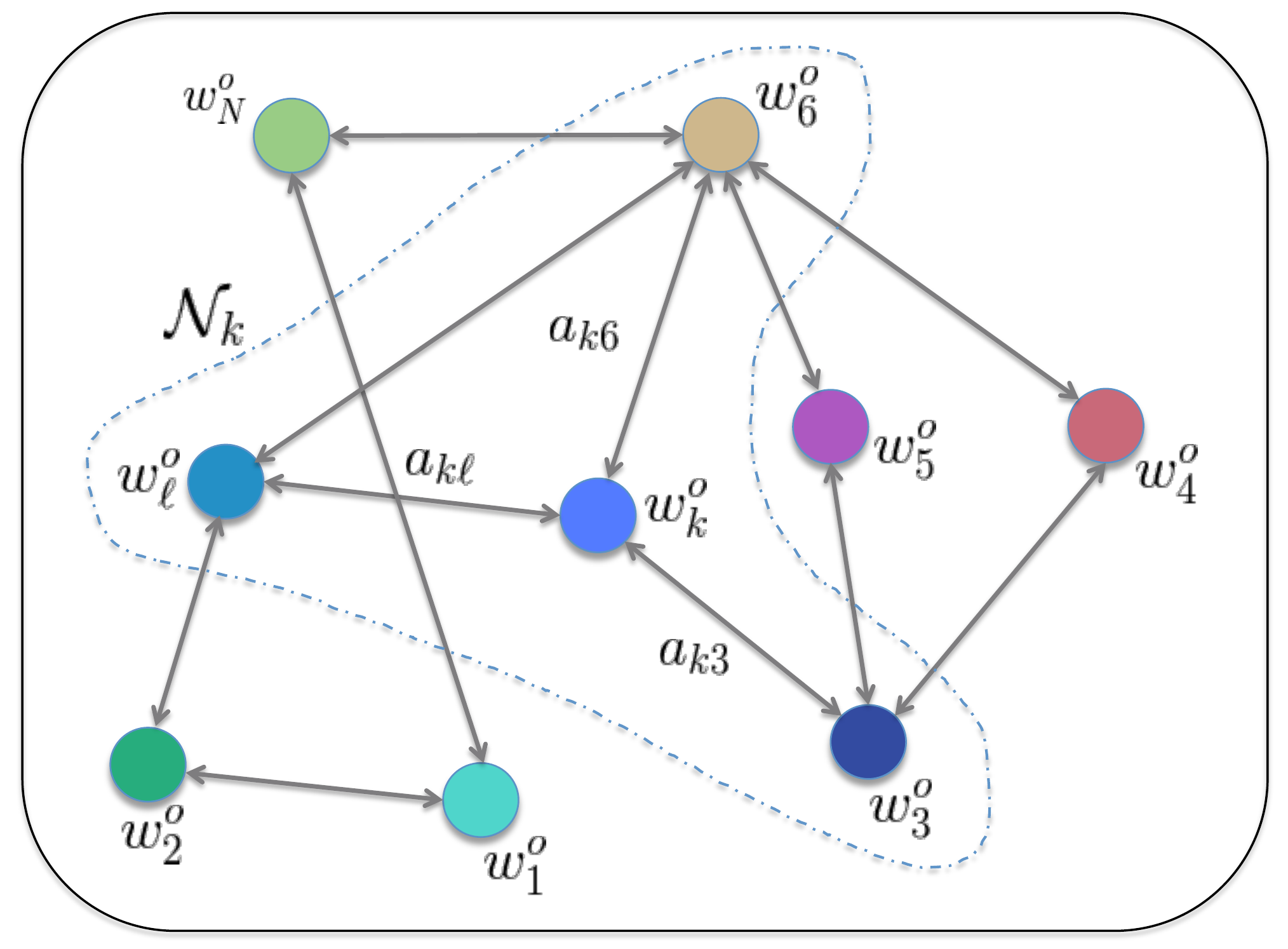}
\caption{Agents linked by an edge can share information. The weight $a_{k\ell}$ over an edge reflects the strength of the relation between $w^o_k$ at node $k$ and $w^o_\ell$ at node $\ell$.}
\label{fig: network topology}
\end{figure}
{We refer to Fig.~\ref{fig: network topology} and consider}  a connected network (or graph) $\cG=\{\cN,\cE,A\}$, where $\cN$ is a set of $N$ agents (nodes), $\cE$ is a set of edges connecting agents with particular relations, and  $A$ is a symmetric, weighted adjacency matrix. If there is an edge connecting agents $k$ and $\ell$, then $[A]_{k\ell}=a_{k\ell}>0$ reflects the strength of the relation between $k$ and $\ell$; otherwise, $[A]_{k\ell}=0$. We introduce the graph Laplacian $L$, which is a differential operator defined as $L=D-A$, where the degree matrix $D$ is a diagonal matrix with $k$-th entry $[D]_{kk}=\sum_{\ell=1}^Na_{k\ell}$. Since $L$ is symmetric positive semi-definite, it possesses a complete set of orthonormal eigenvectors. We denote them by $\{v_1,\ldots,v_N\}$. For convenience, we order the set of real, non-negative eigenvalues of $L$ as $0=\lambda_1<\lambda_2\leq\ldots\leq\lambda_N=\lambda_{\max}(L)$, where, since the network is connected, there is only one zero eigenvalue with corresponding eigenvector $v_1=\frac{1}{\sqrt{N}}\mathds{1}_N$~\cite{chung1997spectral}. Thus, the Laplacian can be decomposed as:
\begin{equation}
\label{eigendecomposition of the Laplacian}
L=V\Lambda V^\top,
\end{equation}
where $\Lambda=\diag\{\lambda_1,\ldots,\lambda_N\}$ and $V=[v_1,\ldots,v_N]$.

Let $w_k\in\mathbb{R}^M$ denote some parameter vector at agent $k$ and let $\cw=\col\{w_1,\ldots,w_N\}$ denote the collection of parameter vectors from across the network. We associate with each agent $k$ a risk function $J_k(w_k):\mathbb{R}^M\rightarrow\mathbb{R}$ assumed to be strongly convex. In most learning and adaptation problems, the risk function is expressed as the expectation of a loss function $Q_k(\cdot)$ and is written as $J_k(w_k)=\expec\,Q_k(w_k;\bx_k)$, where $\bx_k$ denotes the random data. The expectation is computed over the distribution of this data. We denote the unique minimizer of $J_k(w_k)$ by $w^o_k$.  {We introduce {a common} assumption on the risks $\{J_k(w_k)\}$. This condition is applicable to many situations of {interest (see, e.g.,~\cite{sayed2014adaptation,sayed2014adaptive})}.
\begin{assumption}{\rm(Strong convexity)}
\label{assumption: strong convexity}
 It is assumed that the individual costs $J_k(w_k)$ are each twice differentiable and strongly convex such that the Hessian matrix function $H_k(w_k)=\nabla^2_{w_k}J_k(w_k)$ is uniformly bounded from below and {above, say, as:}
\begin{equation}
0<\lambda_{k,\min}I_M\leq H_k(w_k)\leq\lambda_{k,\max}I_M,
\end{equation}
where $\lambda_{k,\min}>0$ for $k=1,\ldots,N$.
\qed
\end{assumption}
}

In many situations, there is prior information available about $\cw^o=\col\{w_1^o,\ldots,w_N^o\}$. In the current work, the prior belief we want to enforce is that the target signal $\cw^o$ is smooth with respect to the underlying weighted graph. References~\cite{chen2014multitask,nassif2016proximal,cao2017decentralized} provide variations for such problems for the special case of mean-square-error costs. {Here we treat general convex costs. Let} $\cL= L\otimes I_M$. The smoothness of $\cw$ can be measured in terms of a quadratic form of the graph Laplacian~\cite{zhou2004regularization,shuman2013emerging,weinberger2007graph}:
\begin{equation}
\label{eq: quadratic regularization}
S(\cw)=\cw^\top\cL \cw=\frac{1}{2}\sum_{k=1}^N\sum_{\ell\in\cN_k}a_{k\ell}\|w_k-w_{\ell}\|^2,
\end{equation}
where $\cN_k$ is the set of neighbors of $k$, i.e., the set of nodes connected to agent $k$ by an edge. {Figure~\ref{fig: network topology} provides an illustration.} The smaller $S(\cw)$ is, the  smoother the  signal $\cw$ on the  graph is. Intuitively, given that the weights are non-negative, $S(\cw)$ shows that $\cw$ is  considered  to  be  smooth if nodes with a large $a_{k\ell}$ on the edge connecting them have similar weight values $\{w_k,w_{\ell}\}$. Our objective is to devise and study a strategy that solves the following regularized problem:
\begin{equation}
\label{eq: global problem}
\cw^o_{\eta}=\arg\min_{\cw}J^{\text{glob}}(\cw)=\sum_{k=1}^NJ_k(w_k)+\frac{\eta}{2}\, \cw^{\top}\cL \cw,
\end{equation}
in a distributed manner where each agent is interested in estimating the $k$-th sub-vector of $\cw^o_{\eta}=\col\{w_{1,\eta}^o,\ldots,w_{N,\eta}^o\}$. The tuning parameter $\eta\geq 0$ controls the trade-off between the two components  of the objective function. Reference~\cite{nassif2018distributed} provides a theoretical motivation for the optimization framework where it is shown {that, under a Gaussian Markov random field assumption,} solving problem~\eqref{eq: global problem} {is} equivalent to finding a maximum a posteriori (MAP) estimate for $\bcw$.  We are particularly interested in solving the problem in the stochastic setting when the distribution of the data $\bx_k$ in  $J_k(w_k)=\expec\,Q_k(w_k;\bx_k)$ is generally unknown. This means that the risks $J_k(w_k)$ and their gradients $\nabla_{w_k}J_k(w_k)$ are unknown. As such, approximate gradient vectors need to be employed. A common construction in stochastic approximation theory is to employ the following approximation at  iteration $i$:
\begin{equation}
\widehat{\nabla_{w_k}J_k}(w_k)=\nabla_{w_k}Q_k(w_k;\bx_{k,i}),
\end{equation}
where $\bx_{k,i}$ represents the data observed at iteration $i$. The difference between the true gradient and its approximation is called the gradient noise $\bs_{k,i}(\cdot)$:
\begin{equation}
\label{eq: gradient noise process}
\bs_{k,i}(w)\triangleq\nabla_{w_k}J_k(w)-\widehat{\nabla_{w_k}J_k}(w).
\end{equation}
Each agent can employ a stochastic gradient descent update to estimate $w^o_{k,\eta}$:
\begin{equation}
\label{eq: update 1}
\bw_{k,i}=\bw_{k,i-1}-\mu\widehat{\nabla_{w_k}J_k}(\bw_{k,i-1})-\mu\eta\sum_{\ell\in\cN_k}a_{k\ell}(\bw_{k,i-1}-\bw_{\ell,i-1}),
\end{equation}
where $\mu>0$ is a small step-size parameter. In this implementation, each agent $k$ collects from its neighbors the estimates $\bw_{\ell,i-1}$, and performs a stochastic-gradient descent update on:
\begin{equation}
\bar{J}_{k,i-1}(w_k)\triangleq J_k(w_k)+\frac{\eta}{2}\sum_{\ell\in\cN_k}a_{k\ell}\|w_k-\bw_{\ell,i-1}\|^2.
\end{equation}
By introducing an auxiliary variable $\bpsi_{k,i}$, strategy~\eqref{eq: update 1} can be implemented in an incremental manner:
\begin{equation}
\label{eq: distributed algorithm}
\left\lbrace
\begin{array}{lr}
\bpsi_{k,i}=\bw_{k,i-1}-\mu\widehat{\nabla_{w_k}J_k}(\bw_{k,i-1})\\
\bw_{k,i}=\bpsi_{k,i}-\mu\eta\displaystyle\sum_{\ell\in\cN_k}a_{k\ell}(\bpsi_{k,i}-\bpsi_{\ell,i}),
\end{array}
\right.
\end{equation}
where we replaced $(\bw_{k,i-1}-\bw_{\ell,i-1})$ in the second step by the difference $(\bpsi_{k,i}-\bpsi_{\ell,i})$ since we expect $\bpsi_{k,i}$ to be an improved estimate compared to $\bw_{k,i-1}$. {Note that if we introduce the coefficients:
\begin{equation}
c_{k\ell}=\left\lbrace
\begin{array}{ll}
1-\mu\eta\displaystyle\sum_{\ell\in\cN_k}a_{k\ell},&k=\ell\\
\mu\eta a_{k\ell},&\ell\in\cN_k\setminus \{k\}\\
0,&\ell\notin\cN_k
\end{array}\right.
\end{equation}
then recursion~\eqref{eq: distributed algorithm} can be written in the diffusion form\cite{chen2013distributed,sayed2014adaptation,chen2015learning,chen2015learning2,sayed2014adaptive}:
\begin{equation}
\label{eq: distributed diffusion algorithm}
\left\lbrace
\begin{array}{lr}
\bpsi_{k,i}=\bw_{k,i-1}-\mu\widehat{\nabla_{w_k}J_k}(\bw_{k,i-1})\\
\bw_{k,i}=\displaystyle\sum_{\ell\in\cN_k}c_{k\ell}\bpsi_{\ell,i},
\end{array}
\right.
\end{equation}
where the second step is a combination step. If we collect the scalars $\{c_{k\ell}\}$ into the matrix $C=[c_{k\ell}]$, then the entries of $C$ are non-negative for small enough $\mu$ and its columns and rows add up to one, i.e., $C$ is a doubly-stochastic matrix. We shall continue with form~\eqref{eq: distributed algorithm} because the second step in~\eqref{eq: distributed algorithm} makes the dependence on $\eta$ explicit. We will show later that by varying the value of $\eta$ we can make the algorithm behave in different ways from {fully non-cooperative} to fully single-task with many other modes in between.}

\subsection{{Summary of main results}}
{Before delving into the study of the learning capabilities of~\eqref{eq: distributed algorithm} and its performance limits, we summarize in this section, for the benefit of the reader, the main conclusions of {this Part I, and its accompanying Part II~\cite{nassif2018diffusion}}. One key insight that will follow from the detailed analysis {in this Part I} is that the smoothing parameter $\eta$ can be regarded as an effective tuning parameter that controls the nature of the learning process. The value of $\eta$ can vary from $\eta=0$ to $\eta\rightarrow\infty$. We will show that at one end, when $\eta=0$, the learning algorithm reduces to a non-cooperative mode of operation where each agent acts individually and estimates its own local model, $w^o_k$. On the other hand, when $\eta\rightarrow\infty$, the learning algorithm moves to a single-mode of operation where all agents cooperate to estimate a {\em single} parameter (namely, the Pareto solution of the aggregate cost function). For any values of $\eta$ in the range $0<\eta<\infty$, the network behaves in a multitask mode where agents seek their individual models while at the same time ensuring that these models satisfy certain smoothness and closeness conditions dictated by the value of $\eta$. We are not only interested in a qualitative description of the network behavior. Instead, we would like to characterize these models in a quantitative manner by deriving expressions that allow us to predict performance as a function of $\eta$ and, therefore, fine tune the network to operate in different scenarios.}

{To begin with, recall that the objective of the multitask strategy~\eqref{eq: distributed algorithm} is to exploit similarities among neighboring agents in an attempt to improve the overall network performance in approaching the collection of individual minimizer ${\cw}^o$ by means of local communications. In light of the fact that algorithm~\eqref{eq: distributed algorithm} has been derived as an (incremental) gradient descent recursion for the regularized cost~\eqref{eq: global problem}, whose minimizer \( \cw_{\eta}^o \) is in general different from  \( \cw^o \), the limiting point of algorithm~\eqref{eq: distributed algorithm} will therefore be generally different from \( \cw^o \), the actual objective of the multitask learning problem. This mismatch is the ``cost'' of enforcing smoothness. The analysis in the paper will reveal that the mismatch is a function of the similarity between the individual minimizers $\{w_k^o\}$, of second-order properties of the individual costs, of the network topology captured by \( L \), and of the regularization strength \( \eta \).  In particular, future expression~\eqref{eq: GFT of w o eta - wo} will allow us to understand the interplay between these quantities which is important  for the design of effective multitask strategies. The key conclusion will be that, while the bias (difference between $\cw_{\eta}^o$ and $\cw^o$) will  in general increase as the regularization strength \( \eta \) increases, the \emph{size} of this increase is determined by the smoothness of $\cw^o$ which is in turn  function of the network topology  captured by \( L \). The more similar the tasks at neighboring agents are, the smaller the bias will be. This result, while intuitive, is reassuring, as it implies that as long as  $\cw^o$ is sufficiently smooth, the bias induced by regularization will remain small, even for moderate regularization strengths \( \eta \).  
}

{The analysis also quantifies the \emph{benefit} of cooperation, namely, the objective of improving the mean-square deviation around the limiting point of the algorithm. This analysis is challenging due to coupling among agents, and the multi-task nature of the learning process (where agents have individual targets but need to meet certain smoothness and closeness conditions with their neighbors). Section~\ref{sec: network stability} in this Part I and Sections III and IV in Part II~\cite{nassif2018diffusion}, and the supporting appendices, are devoted to carrying out this analysis in depth leading, for example to Theorem~1 in Part II~\cite{nassif2018diffusion}. This theorem gives expressions for the mean-square-deviation (MSD) relative to \( \cw_{\eta}^o \). The expressions reveal the effect of the step-size parameter $\mu$, regularization strength $\eta$, network topology, and data characteristics (captured by the smoothness profile, second-order properties of the costs, and second-order moments of the gradient noise) on the size of the  steady-state  mean-square-error performance. The results established in Theorem~1 and expression~(82) in Part II~\cite{nassif2018diffusion} provide  tools for characterizing the performance of multitask strategies in some great detail.}


{To illustrate the power of these results, consider a connected network where each agent is subjected to streaming data. The goal at each agent is to estimate a local parameter vector $w^o_k$ from the observed data by minimizing a cost of the form $J_k(w_k)=\expec\,Q_k(w_k;\bx_k)$, where $\bx_k$ denotes the random data. Consider network  applications where the  minimizers at neighboring agents tend to be similar~\cite{chen2014multitask,shuman2013emerging}. Although each agent is interested in estimating its own task $w^o_k$, cooperating neighboring agents can still benefit from their interactions because of this closeness.  Given the graph Laplacian and data characteristics, one problem of interest would be to determine the optimal cooperation rule, i.e., the value of $\eta$ that minimizes the network mean-square-error performance. Future {expression~(82) in Part II~\cite{nassif2018diffusion}} can be used to solve this problem since it allows us to predict the network MSD relative to $\cw^o$. By using {expression~(82) in Part II~\cite{nassif2018diffusion}}, for example, we will be able to construct curves of the form shown in Fig.~\ref{fig: network MSD bar}, which illustrate how performance is dependent on the smoothness parameter $\eta$ and how the nature of the limiting solution varies as a function of this parameter.  As it can be seen from this figure, $\eta=4$ gives the best network steady-state mean-square performance. Note that $\eta=0$ corresponds to the non-cooperative scenario and that a large $\eta$ induces a large bias in the estimation. In the sequel we will show that as $\eta$ varies from $\eta=0$ to $\eta\rightarrow\infty$, the network behavior moves from the non-cooperative mode of operation (where agents act independently) to the single-task mode of operation (where all agents focus on estimating a single parameter). For values of $\eta$ in between, the network can operate in any of a multitude of multitask modes (where agents estimate their own local parameters under smoothness conditions to allow for some similarity between adjacent nodes). These limits are indicated in Fig.~\ref{fig: network MSD bar}. }
\begin{figure}
\centering
\includegraphics[scale=0.5]{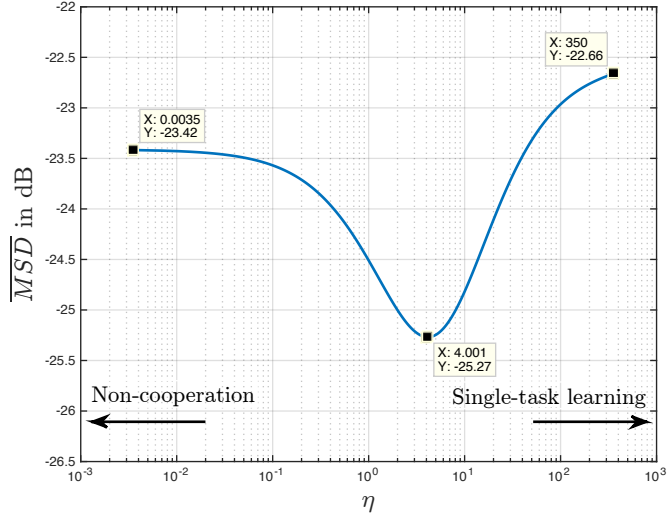}
\caption{Network steady-state MSD relative to a smooth signal $\cw^o$ as a function of the regularization strength $\eta\in[0,350]$ at $\mu=0.005$.}
\label{fig: network MSD bar}
\end{figure}

{Finally, we would like to mention that one of the main tools used in the analysis in this work, and its accompanying Part II~\cite{nassif2018diffusion}, is the linear transformation relative to the eigenspace of the graph Laplacian $L$ from~\cite{chen2013distributed}, which is also known as the graph Fourier transform~\cite{shuman2013emerging,ortega2018graph,tsitsvero2016signals}. Under some conditions on the data and costs profile, we show in {Section~VI-A in Part II~\cite{nassif2018diffusion}} how the diffusion type algorithm~\eqref{eq: distributed algorithm} exhibits a low-pass graph filter behavior. Such  filters are commonly used to reduce the network noise profile when the signal to be estimated is smooth with respect to the underlying topology~\cite{shuman2013emerging,sandryhaila2013discrete,chen2014signal,shuman2011chebyshev}. Interestingly, the theoretical results established in {this Part I, and its accompanying Part II~\cite{nassif2018diffusion},} reveal the reasons for performance improvements under localized cooperation.}

\subsection{Network limit point and regularization strength}
Before examining the behavior and performance of strategy~\eqref{eq: distributed algorithm} with respect to the limiting point $\cw^o_{\eta}$ in~\eqref{eq: global problem}, we discuss the influence of $\eta$ on $\cw^o_{\eta}$. When $\eta=0$, we have {from~\eqref{eq: global problem} that} $\cw^o_{\eta}=\cw^o$ and strategy~\eqref{eq: distributed algorithm} reduces to the single-agent mode of operation or the non-cooperative solution where each agent minimizes $J_k(w_k)$ locally without cooperation. When $\eta\rightarrow\infty$, we have {from~\eqref{eq: global problem} that} $\cw^o_{\eta}=\mathds{1}_N\otimes w^\star$ {where}
\begin{equation}
\label{eq: w star}
w^\star\triangleq\arg\min_{w}\sum_{k=1}^NJ_k(w),
\end{equation}
and we are in the single-task mode of operation where all agents seek to estimate a common parameter vector $w^\star$ corresponding to the minimizer of the aggregate sum of individual costs~\cite{sayed2014adaptation,chen2013distributed,chen2015learning,chen2015learning2,sayed2014adaptive}. In order to study more closely the influence of (finite) $\eta>0$ on the network output $\cw^o_{\eta}$, we  {examine} the influence of $\eta$ on the transformed vector:
\begin{equation}
\label{eq: w bar o eta definition}
\cwb^o_\eta\triangleq(V^\top\otimes I_M)\cw^o_{\eta}=\col\left\{{\wb^o_{m,\eta}}\right\}_{m=1}^N,
\end{equation}
{with the $m$-th sub-vector $\wb^o_{m,\eta}$}  {denoting the} spectral content of $\cw^o_\eta$ at the $m$-th eigenvalue $\lambda_m$ of the Laplacian:
\begin{equation}
{\wb^o_{m,\eta}}=(v_m^\top\otimes I_M)\cwb^o_\eta.
\end{equation}
From~\eqref{eigendecomposition of the Laplacian}, the quadratic {regularization term} $S(\cw)$ in~\eqref{eq: quadratic regularization} can be written as:
\begin{equation}
\label{eq: S(w) smooth}
S(\cw)=\cw^\top\cL \cw=\sum_{m=1}^N\lambda_m\|{\wb_{m}}\|^2=\sum_{m=2}^N\lambda_m\|{\wb_{m}}\|^2,
\end{equation}
where ${\wb_{m}}=(v_m^\top\otimes I_M)\cw$ and where we used the fact that $\lambda_1=0$. Intuitively, given that $\lambda_m>0$ for $m=2,\ldots, N$, the above expression shows that $\cw$ is considered to be smooth if $\|{\wb_{m}}\|^2$ corresponding to large $\lambda_m$ is small. {As a result,} for a fixed $\lambda_m>0$, {and as} the regularization strength $\eta>0$ in~\eqref{eq: global problem} {increases, one would expect $\|{\wb^o_{m,\eta}}\|^2$} {to decrease. Similarly,} for a fixed $\eta\geq 0$, {and as} $\lambda_m>0$ {increases, one would expect} $\|{\wb^o_{m,\eta}}\|^2$ {to decrease as well. However, as} we will see in the {sequel}, this {behavior does not always hold. We show in {Section~VI-A in Part II~\cite{nassif2018diffusion}} that this is valid} when the Hessian matrix function $H_k(w_k)\triangleq\nabla^2_{w_k}J_k(w_k)$ is independent of $w_k$, {i.e., the cost $J_k(w_k)$ is quadratic in $w_k$} and is uniform across the network. For more general scenarios, this is not necessarily the {case. What is useful to note, however, is that as $\eta$ moves from $0$ towards $\infty$, a variety of solution points $\cw^o_{\eta}$ can occur ranging from the non-cooperative to the single-task solution at both extremes.}

\noindent From the optimality condition of~\eqref{eq: global problem}, we have:
\begin{equation}
\label{eq: optimality condition}
\col\left\{\nabla_{w_k}J_k(w_{k,\eta}^o)\right\}_{k=1}^N=-\eta\cL \cw^o_{\eta},
\end{equation}
Using the mean value theorem~\cite[pp.~24]{polyak1987introduction}, we can write:
\begin{equation}
\label{eq: mean value theorem}
\nabla_{w_k}J_k(w_{k,\eta}^o)-{\underbrace{\nabla_{w_k}J_k(w_{k}^o)}_{=0}}=H^o_{k,\eta}(w_{k,\eta}^o-w_{k}^o),
\end{equation}
where
\begin{equation}
H^o_{k,\eta}\triangleq\int_{0}^1\nabla^2_{w_k}J_k(w_{k}^o+t(w_{k,\eta}^o-w_{k}^o))dt.
\end{equation}
Let $\cH^o_{\eta}\triangleq\diag\left\{H^o_{k,\eta}\right\}_{k=1}^N$. Relation~\eqref{eq: optimality condition} can {then be rewritten more compactly as:}
\begin{equation}
\label{eq: w o eta}
\cw^o_{\eta}=\left( \cH^o_{\eta}+\eta\cL \right)^{-1} \cH^o_{\eta}\cw^o.
\end{equation}
Note that the inverse in~\eqref{eq: w o eta} {exists for all $\eta\geq 0$} since the matrix $\cL$ is positive semi-definite {and,}  under Assumption~\ref{assumption: strong convexity}, the matrix $\cH_{\eta}^o$ is positive definite. Pre-multiplying both sides of the above relation by $(V\otimes I_M)^\top$ gives:
\begin{equation}
\label{eq: GFT of w o eta}
\cwb^o_{\eta}=(\cHb_{\eta}^o+\eta\cJ)^{-1}\cHb^o_{\eta}\cwb^o,
\end{equation}
where {$\cwb^o_{\eta}$ is defined in~\eqref{eq: w bar o eta definition}}, $\cwb^o\triangleq\cV^\top \cw^o$, $\cV\triangleq V\otimes I_M$, and
\begin{align}
\cJ&\triangleq\Lambda\otimes I_M,\label{eq: definition cJ}\\
\cHb^o_{\eta}&\triangleq\cV^\top\cH^o_{\eta}\cV.\label{eq: definition cH o eta}
\end{align}
 Since $L$ has a single eigenvalue at zero, $\Lambda$ and $V$ can be partitioned as follows:
\begin{equation}
\Lambda=\diag\{0,\Lambda_o\},~~V=[v_1,V_R],  ~~\text{and} ~~ V^\top=\col\{v^\top_1,V_R^\top\},.
\end{equation}

\begin{lemma}{\rm (Limiting point)}
\label{lemm: transformed w o eta}
Under Assumption~\ref{assumption: strong convexity}, it can be shown that $\cwb^o_{\eta}$ given by~\eqref{eq: GFT of w o eta} {satisfies}:
\begin{equation}
\label{eq: GFT of w o eta 2}
\cwb^o_{\eta}=\left[
\begin{array}{lc}
I_M&\cQ_{11}^{-1}\cQ_{12}\left(I_{M(N-1)}-\cK\right)\\
0&  \cK
\end{array}
\right]
\left[
\begin{array}{lr}
{\wb^o_{1}}\\
\left[\cwb^o\right]_{2:N}
\end{array}
\right],
\end{equation}
where ${\wb^o_{1}}=(v_1^\top\otimes I_M)\cw^o$, $\left[\cwb^o\right]_{2:N}=(V_R^\top\otimes I_M)\cw^o$ and
\begin{align}
\cQ_{11}&\triangleq(v_1^\top\otimes I_M)\cH^o_{\eta}(v_1\otimes I_M)=\frac{1}{N}\sum_{k=1}^NH^o_{k,\eta},\label{eq: Q 11 lemma}\\
\cQ_{12}&\triangleq(v_1^\top\otimes I_M)\cH^o_{\eta}(V_R\otimes I_M),\label{eq: Q 12 lemma}\\
\cQ_{22}&\triangleq(V_R^\top\otimes I_M)\cH^o_{\eta}(V_R\otimes I_M)+\eta\Lambda_o\otimes I_M,\label{eq: Q 22 lemma}\\
\cG&\triangleq(\cQ_{22}-\cQ_{12}^\top\cQ_{11}^{-1}\cQ_{12})^{-1},\label{eq: G 22 lemma}\\
\cK&\triangleq I_{M(N-1)}-\eta\,\cG\left(\Lambda_o\otimes I_M\right)\label{eq: K 22 lemma 1}
\end{align}
\end{lemma}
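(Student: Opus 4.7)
The plan is to exploit the block partition induced by the splitting $V=[v_1,V_R]$ and $\Lambda=\diag\{0,\Lambda_o\}$, and to avoid inverting the $N\times N$ block matrix from scratch by means of the identity
\begin{equation}
(\cHb^o_{\eta}+\eta\cJ)^{-1}\cHb^o_{\eta}=I_{MN}-\eta(\cHb^o_{\eta}+\eta\cJ)^{-1}\cJ.
\end{equation}
Since $\cJ=\Lambda\otimes I_M$ has a zero in its top-left $M\times M$ block, the product $(\cHb^o_{\eta}+\eta\cJ)^{-1}\cJ$ will automatically have a zero first block-column, which yields the zero block in the $(2,1)$ position of~\eqref{eq: GFT of w o eta 2} and the identity block in the $(1,1)$ position. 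Thus I only need to compute the $(1,2)$ and $(2,2)$ blocks of $(\cHb^o_{\eta}+\eta\cJ)^{-1}$.

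Next I would partition $\cHb^o_{\eta}+\eta\cJ$ into the four blocks
\begin{equation}
\cHb^o_{\eta}+\eta\cJ=\begin{bmatrix}\cQ_{11}&\cQ_{12}\\ \cQ_{12}^\top&\cQ_{22}\end{bmatrix},
\end{equation}
using the very definitions~\eqref{eq: Q 11 lemma}--\eqref{eq: Q 22 lemma}, and noting that the $\eta\Lambda_o\otimes I_M$ contribution from $\eta\cJ$ is already absorbed into $\cQ_{22}$ while $\cQ_{11}$ sees no contribution from $\eta\cJ$ because of the zero eigenvalue at $\lambda_1=0$. I would then apply the standard block-inverse formula with Schur complement of $\cQ_{11}$, whose inverse is exactly the matrix $\cG$ defined in~\eqref{eq: G 22 lemma}. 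This gives directly $[(\cHb^o_{\eta}+\eta\cJ)^{-1}]_{22}=\cG$ and $[(\cHb^o_{\eta}+\eta\cJ)^{-1}]_{12}=-\cQ_{11}^{-1}\cQ_{12}\cG$.

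Plugging these into $I_{MN}-\eta(\cHb^o_{\eta}+\eta\cJ)^{-1}\cJ$ gives a $(2,2)$ block equal to $I_{M(N-1)}-\eta\cG(\Lambda_o\otimes I_M)=\cK$, and a $(1,2)$ block equal to $\eta\cQ_{11}^{-1}\cQ_{12}\cG(\Lambda_o\otimes I_M)=\cQ_{11}^{-1}\cQ_{12}(I_{M(N-1)}-\cK)$, which are exactly the entries appearing in~\eqref{eq: GFT of w o eta 2}. Applying this operator to $\cwb^o=\col\{\wb^o_1,[\cwb^o]_{2:N}\}$ then delivers the claimed formula.

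The only technical point I need to justify is that all inverses involved are well-defined. Assumption~\ref{assumption: strong convexity} makes $\cH^o_\eta$ positive definite, so $\cQ_{11}=\frac{1}{N}\sum_k H^o_{k,\eta}$ is positive definite and therefore invertible, and the full matrix $\cHb^o_{\eta}+\eta\cJ$ is positive definite (since $\eta\cJ\succeq 0$), so its Schur complement $\cG^{-1}=\cQ_{22}-\cQ_{12}^\top\cQ_{11}^{-1}\cQ_{12}$ is also positive definite and $\cG$ exists. This is the only nontrivial obstacle; everything else reduces to carefully bookkeeping the block-matrix algebra outlined above.
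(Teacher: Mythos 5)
Your proposal is correct and follows essentially the same route as the paper's own proof: the identity $(U+W)^{-1}U=I-(U+W)^{-1}W$, the block partition of $\cHb^o_{\eta}+\eta\cJ$ into $\cQ_{11},\cQ_{12},\cQ_{22}$, the block-inversion formula with Schur complement $\cG^{-1}=\cQ_{22}-\cQ_{12}^\top\cQ_{11}^{-1}\cQ_{12}$, and the observation that the zero eigenvalue $\lambda_1=0$ kills the first block-column of $\cJ$. Your invertibility justification (positive definiteness of $\cQ_{11}$ and of the full matrix, hence of the Schur complement) likewise matches the paper's argument.
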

\begin{proof}
See Appendix~\ref{sec: Proof of Lemma transformed w o eta} where {we also show that}:
\begin{equation}
\label{eq: order of K 22}
\|\cK\|\leq\left(\max_{1\leq k \leq N}\lambda_{k,\max}\right)\left(\eta\lambda_2(L)+\min_{1\leq k \leq N}\lambda_{k,\min}\right)^{-1}=\frac{O(1)}{(O(1)+O(\eta))}.
\end{equation}
\end{proof}
\noindent {Consider the difference between $\cw^o_{\eta}$ and $\cw^o$. It turns out that the smoother $\cw^o$ is, the smaller $\|\cw^o-\cw^o_{\eta}\|$ will be. To see this, let us subtract {$\cwb^o$ from both sides of equation~\eqref{eq: GFT of w o eta 2}}. We obtain:
\begin{equation}
\label{eq: GFT of w o eta - wo}
\cwb^o_{\eta}-\cwb^o=\left[
\begin{array}{c}
 \cQ_{11}^{-1}\cQ_{12}\left( I_{M(N-1)}-\cK\right)\\
\cK-I_{M(N-1)}
\end{array}
\right]
\left[\cwb^o\right]_{2:N}.
\end{equation}
The difference $\cwb^o_{\eta}-\cwb^o$ depends on $\left[\cwb^o\right]_{2:N}$. {Thus, from~\eqref{eq: S(w) smooth} and~\eqref{eq: GFT of w o eta - wo}, we conclude that} the smoother $\cw^o$ is, the smaller $\|\cw^o_{\eta}-\cw^o\|=\|\cwb^o_{\eta}-\cwb^o\|$ will be.  }

{Lemma~\ref{lemm: transformed w o eta} will be useful in the sequel to establish Theorem~\ref{theo: dimension of the bias}  and to provide a low-pass graph filter interpretation for the uniform Hessian matrices scenario considered in {Section~VI-A in Part II~\cite{nassif2018diffusion}}.}
\section{Network stability}
\label{sec: network stability}
We examine the behavior of algorithm~\eqref{eq: distributed algorithm} under {Assumption~\ref{assumption: gradient noise}}  on the gradient noise processes $\{\bs_{k,i}(\cdot)\}$ defined in~\eqref{eq: gradient noise process}. As explained in~\cite{sayed2014adaptation,sayed2014adaptive}, these conditions are {automatically satisfied in many situations of interest} in learning and adaptation. {Condition~\eqref{eq: mean gradient noise condition} essentially states that the gradient vector approximation should be unbiased conditioned on the past data, which is a reasonable condition to require. Condition~\eqref{eq: condition on second-order moment of gradient noise} states that the second-order moment of the gradient noise process should get smaller for better estimates, since it is bounded by the squared norm of the iterate. Condition~\eqref{eq: uncorrelated gradient noises} states that the gradient noises across the agents are uncorrelated.}
\begin{assumption}
\label{assumption: gradient noise}
\emph{(Gradient noise process)} The gradient noise process defined in~\eqref{eq: gradient noise process} satisfies for any $\bw\in\bcF_{i-1}$ {and for all $k,\ell=1,2,\ldots,N$:
\begin{align}
\expec[\bs_{k,i}(\bw)|\bcF_{i-1}]&=0,\label{eq: mean gradient noise condition}\\
\expec[\|\bs_{k,i}(\bw)\|^2|\bcF_{i-1}]&\leq\beta^2_k\|\bw\|^2+\sigma^2_{s,k},\label{eq: condition on second-order moment of gradient noise}\\
\expec[\bs_{k,i}(\bw)\bs_{\ell,i}^\top(\bw)|\bcF_{i-1}]&=0,\quad k\neq \ell,\label{eq: uncorrelated gradient noises}
\end{align}
for some} $\beta^2_k\geq 0$,  $\sigma^2_{s,k}\geq 0$, and where $\bcF_{i-1}$ denotes the filtration generated by the random processes $\{\bw_{\ell,j}\}$ for all $\ell=1,\ldots,N$ and $j\leq i-1$.
\qed
\end{assumption}

In this section, we analyze how well the multitask strategy~\eqref{eq: distributed algorithm} approaches the optimal solution $\cw^o_{\eta}$ of the regularized cost~\eqref{eq: global problem}. We examine this performance in terms of the mean-square-error measure, $\expec\|w^o_{k,\eta}-\bw_{k,i}\|^2$, the fourth-order moment, $\expec\|w^o_{k,\eta}-\bw_{k,i}\|^4$, and the mean-error process, $\expec (w^o_{k,\eta}-\bw_{k,i})$. To establish mean-square error stability, we extend the energy analysis framework of~\cite{chen2013distributed} to handle multitask distributed optimization. Then, following a similar line of reasoning as in~\cite[Chapter 9]{sayed2014adaptation}, we establish the {stability of the first and fourth-order moments, which is} necessary to {arrive at an expression for the} steady-state performance {in Part II~\cite{nassif2018diffusion}}. 

Let us introduce the network block vector $\bcw_i=\col\{\bw_{1,i},\ldots,\bw_{N,i}\}$. At each iteration, we can view~\eqref{eq: distributed algorithm} as a mapping from $\bcw_{i-1}$ to $\bcw_i$:
\begin{equation}
\label{eq: network vector recursion}
\boxed{\bcw_i=\left(I_{MN}-\mu\eta\cL\right)\left(\bcw_{i-1}-\mu\,\col\left\{\widehat{\nabla_{w_k}J_k}(\bw_{k,i-1})\right\}_{k=1}^N\right)}
\end{equation}
{We introduce the following condition} on the combination matrix $(I_{MN}-\mu\eta\cL)$, which is necessary for studying the performance {of~\eqref{eq: distributed algorithm}. It can be easily verified that this requirement is always met by selecting $\mu$ and $\eta$ to satisfy the bounds~\eqref{eq: condition for stability}--\eqref{eq: condition for positivity}.}
\begin{assumption}
\label{assumption: combination matrix}
\emph{(Combination matrix)} The symmetric combination matrix $\left(I_{MN}-\mu\eta\cL\right)$ {has nonnegative entries and its spectral radius is equal to one. Since $L$ has an eigenvalue at zero, these} conditions are satisfied when the step-size $\mu>0$ and the regularization strength $\eta\geq 0$ satisfy:
\begin{align}
\label{eq: condition for stability}&0\leq\mu\eta\leq\frac{2}{\lambda_{\max}(L)},\\
\label{eq: condition for positivity}&0\leq\mu\eta\leq\min_{1\leq k\leq N}\left\{\frac{1}{\sum_{\ell=1}^Na_{k\ell}}\right\},
\end{align}
where condition~\eqref{eq: condition for stability} ensures stability and condition~\eqref{eq: condition for positivity} ensures non-negative entries.
\qed
\end{assumption}

\subsection{Stability of Second-Order Error Moment}
\label{subsec: stability of second order moment}
We first show that algorithm~\eqref{eq: distributed algorithm}, in the absence of gradient noise, converges and has a unique fixed-point. Then, we analyze the distance between this point and the vectors $w^o_{k,\eta}$ and $\bw_{k,i}$ in the mean-square-sense.
\subsubsection{Existence and uniqueness of fixed-point}
Without gradient noise, relation~\eqref{eq: network vector recursion} reduces to:
\begin{equation}
\label{eq: network vector recursion deterministic}
\cw_i=\left(I_{MN}-\mu\eta\cL\right)\left(\cw_{i-1}-\mu\,\col\left\{\nabla_{w_k}J_k(w_{k,i-1})\right\}_{k=1}^N\right).
\end{equation}
Let $\cx\triangleq\col\{x_1,\ldots,x_N\}$ denote an $N\times1$ block vector, where $x_k$ is $M\times1$. The mapping~\eqref{eq: network vector recursion deterministic} is equivalent to the deterministic mapping $\cx\rightarrow \cy$ defined as:
\begin{equation}
\label{eq: deterministic mapping}
\cy=\left(I_{MN}-\mu\eta\cL\right)\left(\cx-\mu\,\col\left\{\nabla_{w_k}J_k(x_k)\right\}_{k=1}^N\right).
\end{equation}
\begin{lemma}{\emph{(Contractive mapping)}}
\label{lem: fixed point convergence}
Under Assumption~\ref{assumption: strong convexity} and condition~\eqref{eq: condition for stability}, the deterministic mapping defined in~\eqref{eq: deterministic mapping} satisfies:
\begin{equation}
\|\cy^1-\cy^2\|\leq\gamma\|\cx^1-\cx^2\|,
\end{equation}
with $\gamma\triangleq\max_{1\leq k\leq N}\{\gamma_k\}$ where:
\begin{equation}
\label{eq: gamma_k}
\gamma_k\triangleq\max\{|1-\mu\lambda_{k,\min}|,|1-\mu\lambda_{k,\max}|\}.
\end{equation}
This mapping is contractive when $\mu$ satisfies:
\begin{equation}
\label{eq: condition 1}
0<\mu<\min_{1\leq k\leq N}\left\{\frac{2}{\lambda_{k,\max}}\right\}.
\end{equation}
\end{lemma}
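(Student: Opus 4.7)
The plan is to start from the definition of the mapping, take the difference $\cy^1-\cy^2$, and factor it as a product of two symmetric matrices acting on $\cx^1-\cx^2$. Concretely, I would write
\begin{equation*}
\cy^1-\cy^2 = \left(I_{MN}-\mu\eta\cL\right)\left[(\cx^1-\cx^2)-\mu\,\col\left\{\nabla_{w_k}J_k(x_k^1)-\nabla_{w_k}J_k(x_k^2)\right\}_{k=1}^N\right].
\end{equation*}
Then, invoking the mean value theorem~\cite[pp.~24]{polyak1987introduction} exactly as in~\eqref{eq: mean value theorem}, I would express each gradient difference as $\nabla_{w_k}J_k(x_k^1)-\nabla_{w_k}J_k(x_k^2)=\bar H_k(x_k^1-x_k^2)$ where $\bar H_k\triangleq\int_0^1\nabla^2_{w_k}J_k\bigl(x_k^2+t(x_k^1-x_k^2)\bigr)\,dt$. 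By Assumption~\ref{assumption: strong convexity}, each $\bar H_k$ inherits the bounds $\lambda_{k,\min}I_M\preceq \bar H_k\preceq\lambda_{k,\max}I_M$. Collecting these into the block-diagonal matrix $\bar\cH\triangleq\diag\{\bar H_k\}_{k=1}^N$, the identity becomes
\begin{equation*}
\cy^1-\cy^2=\left(I_{MN}-\mu\eta\cL\right)\bigl(I_{MN}-\mu\bar\cH\bigr)(\cx^1-\cx^2).
\end{equation*}

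From here the argument is a two-factor norm bound. Using submultiplicativity of the induced $2$-norm,
\begin{equation*}
\|\cy^1-\cy^2\|\leq \bigl\|I_{MN}-\mu\eta\cL\bigr\|\cdot\bigl\|I_{MN}-\mu\bar\cH\bigr\|\cdot\|\cx^1-\cx^2\|.
\end{equation*}
By Assumption~\ref{assumption: combination matrix} (guaranteed by~\eqref{eq: condition for stability}), the matrix $I_{MN}-\mu\eta\cL$ is symmetric with spectral radius one, so its $2$-norm equals $1$. For the second factor, since $\bar\cH$ is block diagonal and symmetric, $\|I_{MN}-\mu\bar\cH\|=\max_k\|I_M-\mu\bar H_k\|$, and the eigenvalue bounds on $\bar H_k$ give $\|I_M-\mu\bar H_k\|\leq\max\{|1-\mu\lambda_{k,\min}|,|1-\mu\lambda_{k,\max}|\}=\gamma_k$. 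Taking the maximum over $k$ yields $\|\cy^1-\cy^2\|\leq\gamma\,\|\cx^1-\cx^2\|$ with $\gamma=\max_k\gamma_k$.

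To conclude strict contractiveness, I would verify that the step-size range~\eqref{eq: condition 1} forces $\gamma<1$. Indeed, $\gamma_k<1$ amounts to requiring both $|1-\mu\lambda_{k,\min}|<1$ and $|1-\mu\lambda_{k,\max}|<1$; since $0<\lambda_{k,\min}\leq\lambda_{k,\max}$, the binding constraint is $0<\mu<2/\lambda_{k,\max}$, and imposing this for every $k$ gives~\eqref{eq: condition 1}. There is no serious obstacle in the argument; the only subtle point worth emphasizing is that one must invoke Assumption~\ref{assumption: combination matrix} to replace the $2$-norm of $I_{MN}-\mu\eta\cL$ by one (otherwise a loose bound would force an unnecessary tightening of the step-size). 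Once that is in place, the block-diagonal structure of $\bar\cH$ makes the final maximization over agents immediate.
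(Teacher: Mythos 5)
Your proposal is correct and follows essentially the same route as the paper's proof in Appendix~\ref{app: proof of convergence to a fixed point}: the mean-value-theorem factorization into $(I_{MN}-\mu\eta\cL)(I_{MN}-\mu\bar\cH)$, the submultiplicative two-factor norm bound with $\|I_{MN}-\mu\eta\cL\|=1$ under~\eqref{eq: condition for stability} (the paper argues this directly from the zero eigenvalue of $L$ rather than citing Assumption~\ref{assumption: combination matrix}, but it is the same fact), and the block-diagonal maximization yielding $\gamma$. No gaps.
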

\begin{proof}
See Appendix~\ref{app: proof of convergence to a fixed point}.
\end{proof}
It then follows from Banach's fixed point theorem~\cite[pp. 299--303]{kreyszig1989introductory} that iteration~\eqref{eq: network vector recursion deterministic} converges to a unique fixed point $\cw_{\infty}=\lim_{i\rightarrow\infty}\cw_i=\col\{w_{1,\infty},\ldots,w_{N,\infty}\}$ at an exponential rate given by $\gamma$. Observe that this fixed point  is not $\cw^o_{\eta}$. Since we wish to study $\limsup_{i\rightarrow\infty}\expec\|\cw^o_{\eta}-\bcw_i\|^2$, which can be decomposed as:
\begin{align}
\limsup_{i\rightarrow\infty}\expec\|\cw^o_{\eta}-\bcw_i\|^2&=\limsup_{i\rightarrow\infty}\expec\|\cw^o_{\eta}-\cw_{\infty}+\cw_{\infty}-\bcw_i\|^2\notag\\
&\leq2\|\cw^o_{\eta}-\cw_{\infty}\|^2+2\limsup_{i\rightarrow\infty}\expec\|\cw_{\infty}-\bcw_i\|^2,\label{eq: bound on mean-square expectation}
\end{align}
we shall first asses the size of $\|\cw^o_{\eta}-\cw_{\infty}\|^2$ and then examine the quantity $\limsup_{i\rightarrow\infty}\expec\|\cw_{\infty}-\bcw_i\|^2$.

\subsubsection{Fixed point bias analysis}
\label{eq: subsec fixed point bias analysis}
Now we analyze how far this fixed point $\cw_{\infty}$ is from the desired solution $\cw^o_{\eta}$ when the step-size $\mu$ is small. We carry out the analysis in two steps. First, we derive an expression for $\cwt_{\infty}\triangleq \cw^o_{\eta}-\cw_{\infty}$ and then we asses its size. Since $\cw_{\infty}$ is the fixed point of~\eqref{eq: network vector recursion deterministic}, we have at convergence:
\begin{equation}
\label{eq: recursion 1}
\boxed{\cw_{\infty}=\left(I_{MN}-\mu\eta\cL\right)\left(\cw_{\infty}-\mu\,\col\left\{\nabla_{w_k}J_k(w_{k,\infty})\right\}_{k=1}^N\right)}
\end{equation}
Let $\wt_{k,\infty}\triangleq w^o_{k,\eta}-w_{k,\infty}$. Using the mean-value {theorem~\cite[pp.~24]{polyak1987introduction},\cite[Appendix~D]{sayed2014adaptation}}, we can write:
\begin{equation}
\label{eq: mean value theorem for the error recursion}
\nabla_{w_k}J_k(w_{k,\infty})=\nabla_{w_k}J_k(w_{k,\eta}^o)-H_{k,\infty}\wt_{k,\infty},
\end{equation}
where
\begin{equation}
H_{k,\infty}\triangleq\int_{0}^1\nabla^2_{w_k}J_k(w^o_{k,\eta}-t\wt_{k,\infty})dt.
\end{equation}
Subtracting the vector $(I_{MN}-\mu\eta\cL)\cw^o_{\eta}$ from both sides of~\eqref{eq: recursion 1} and using relation~\eqref{eq: mean value theorem for the error recursion}, we obtain:
\begin{equation}
\label{eq: recursion 2 a}
\cwt_{\infty}=(I_{MN}-\mu\eta\cL)(I_{MN}-\mu\mathcal{H}_{\infty})\cwt_{\infty}+\mu\eta\cL \cw^o_{\eta}+\mu(I_{MN}-\mu\eta\cL)\col\left\{\nabla_{w_k}J_k(w_{k,\eta}^o)\right\}_{k=1}^N,
\end{equation}
where $\mathcal{H}_{\infty}\triangleq\diag\{H_{1,\infty},\ldots,H_{N,\infty}\}$. From~\eqref{eq: optimality condition}, recursion~\eqref{eq: recursion 2 a} can be written alternatively as:
\begin{equation}
\label{eq: recursion 2 b}
\cwt_{\infty}=(I_{MN}-\mu\eta\cL)(I_{MN}-\mu\mathcal{H}_{\infty})\cwt_{\infty}+\mu^2\eta^2\cL^2\cw^o_\eta,
\end{equation}
so that:
\begin{equation}
\label{eq: recursion 2}
\boxed{\cwt_{\infty}=\mu^2\eta^2\left[I_{MN}-(I_{MN}-\mu\eta\cL)(I_{MN}-\mu\mathcal{H}_{\infty})\right]^{-1}\cL^2\cw^o_\eta}
\end{equation}
The inverse exists when $(I_{MN}-\mu\eta\cL)(I_{MN}-\mu\mathcal{H}_{\infty})$ is stable, i.e., its spectral radius is less than one. Since the spectral radius of a matrix is upper bounded by any of its induced {norms}, we have:
\begin{equation}
\rho((I_{MN}-\mu\eta\cL)(I_{MN}-\mu\mathcal{H}_{\infty}))\leq\|I_{MN}-\mu\eta\cL\|\|I_{MN}-\mu\mathcal{H}_{\infty}\|,
\end{equation}
in terms of the $2-$induced norm. Under condition~\eqref{eq: condition for stability} and since $\lambda_1(L)=0$, we have $\|I_{MN}-\mu\eta\cL\|=1$. From Assumption~\ref{assumption: strong convexity}, we have:
\begin{equation}
(1-\mu\lambda_{k,\max})I_M\leq I_M-\mu H_{k,\infty}\leq(1-\mu\lambda_{k,\min})I_M,
\end{equation}
so that $\|I_{MN}-\mu\mathcal{H}_{\infty}\|_2\leq \max_{1\leq k\leq N}\gamma_k$ with $\gamma_k$ given in~\eqref{eq: gamma_k}. We conclude that when~\eqref{eq: condition for stability}  and~\eqref{eq: condition 1} are satisfied, the inverse exists.

From~\eqref{eq: recursion 2}, we observe that $\cwt_{\infty}$ is zero {in two cases: i) when $\eta=0$; ii) when $w^o_k=w^o$ $\forall k$, i.e., $\cw^o=\mathds{1}_N\otimes w^o$. In the second case, consider~\eqref{eq: GFT of w o eta 2} and observe that $\wb^o_1=\sqrt{N}w^o$, $\left[\cwb^o\right]_{2:N}=0$, and $\cwb^o_{\eta}=\col\{\wb^o_1,0\}$. Thus, $\cw^o_{\eta}=(V\otimes I_M)\cwb^o_{\eta}=(v_1\otimes I_M)\wb^o_1=\mathds{1}_N\otimes w^o$ and $\cL\cw^o_\eta=\cL(\mathds{1}_N\otimes w^o)=0$.}

\begin{theorem}{\emph{(Fixed point bias size)}} \label{theo: dimension of the bias}
Under Assumption~\ref{assumption: strong convexity} and for small $\mu$ {satisfying conditions~\eqref{eq: condition for stability} and~\eqref{eq: condition 1}}, the steady-state bias $\cwt_{\infty}=\cw^o_{\eta}-\cw_{\infty}$ of the mapping~\eqref{eq: network vector recursion deterministic} satisfies:
\begin{equation}
\label{eq: bound on the bias}
{\mu\lim_{\mu\rightarrow 0}\left(\frac{1}{\mu}\|\cw^o_{\eta}-\cw_{\infty}\|\right)\leq \mu\frac{O(\eta^2)}{(O(1)+O(\eta))^2}.}
\end{equation}
\end{theorem}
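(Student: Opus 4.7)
The starting point is the closed-form expression~\eqref{eq: recursion 2}, which represents $\cwt_{\infty}$ as a specific matrix function of $\mu$ applied to $\cL^{2}\cw^o_{\eta}$. My plan proceeds in three steps: (i) expand the matrix inverse in~\eqref{eq: recursion 2} to isolate the leading $\mu$ behavior; (ii) identify the limit of $\cH_{\infty}$ as $\mu\to 0$; and (iii) bound the limiting quantity by moving to the graph Fourier domain and exploiting the block identities supplied by Lemma~\ref{lemm: transformed w o eta}.

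For step (i), I would start from the algebraic identity
\begin{equation*}
I_{MN}-(I_{MN}-\mu\eta\cL)(I_{MN}-\mu\cH_{\infty})\;=\;\mu(\cH_{\infty}+\eta\cL)\bigl[I_{MN}-\mu\eta(\cH_{\infty}+\eta\cL)^{-1}\cL\cH_{\infty}\bigr].
\end{equation*}
Under conditions~\eqref{eq: condition for stability}--\eqref{eq: condition 1} the bracketed factor is $I_{MN}+O(\mu)$, so its inverse also equals $I_{MN}+O(\mu)$; substituting back into~\eqref{eq: recursion 2} yields
\begin{equation*}
\cwt_{\infty}\;=\;\mu\eta^{2}(\cH_{\infty}+\eta\cL)^{-1}\cL^{2}\cw^o_{\eta}\;+\;O(\mu^{2}).
\end{equation*}
For step (ii), dividing the fixed-point equation~\eqref{eq: recursion 1} by $\mu$ and letting $\mu\to 0$ recovers the optimality condition~\eqref{eq: optimality condition}, so $\cw_{\infty}\to\cw^o_{\eta}$ and hence $\cH_{\infty}\to\cH^o_{\eta}$ by continuity. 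Therefore $\lim_{\mu\to 0}\mu^{-1}\cwt_{\infty}=\eta^{2}(\cH^o_{\eta}+\eta\cL)^{-1}\cL^{2}\cw^o_{\eta}$.

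For step (iii), I would transfer to the graph Fourier domain. Because $\cV^{\top}\cL=\cJ\cV^{\top}$ with $\cV$ orthogonal, the limiting quantity has the same norm as $\eta^{2}(\cHb^o_{\eta}+\eta\cJ)^{-1}\cJ^{2}\cwb^o_{\eta}$. Lemma~\ref{lemm: transformed w o eta} gives $\cJ^{2}\cwb^o_{\eta}=\col\{0,(\Lambda_o\otimes I_M)^{2}\cK\,[\cwb^o]_{2:N}\}$, while the $2\times 2$ block inverse of $\cHb^o_{\eta}+\eta\cJ$ has bottom-right block $\cG$ and top-right block $-\cQ_{11}^{-1}\cQ_{12}\cG$. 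Both blocks of the resulting product are therefore controlled by
\begin{equation*}
\cG(\Lambda_o\otimes I_M)^{2}\cK\;=\;\tfrac{1}{\eta}(I_{M(N-1)}-\cK)(\Lambda_o\otimes I_M)\cK,
\end{equation*}
where I used the identity $\cG(\Lambda_o\otimes I_M)=(I_{M(N-1)}-\cK)/\eta$ read off directly from~\eqref{eq: K 22 lemma 1}. The bound~\eqref{eq: order of K 22} yields $\|\cK\|\leq O(1)/(O(1)+O(\eta))$ and, through the analogous Schur-complement estimate $\|\cG\|\leq(\min_k\lambda_{k,\min}+\eta\lambda_{2}(L))^{-1}$, also $\|I_{M(N-1)}-\cK\|=\eta\|\cG(\Lambda_o\otimes I_M)\|\leq O(\eta)/(O(1)+O(\eta))$. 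Multiplying these bounds gives $\|\cG(\Lambda_o\otimes I_M)^{2}\cK\|\leq O(1)/(O(1)+O(\eta))^{2}$, and pulling back the prefactor $\mu\eta^{2}$ delivers~\eqref{eq: bound on the bias}.

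The crux of the argument, and the place where the bookkeeping is most delicate, is harvesting the \emph{second} factor of $(O(1)+O(\eta))^{-1}$. A naive use of submultiplicativity that separates $\|(\cH^o_{\eta}+\eta\cL)^{-1}\|$ from $\|\cL^{2}\cw^o_{\eta}\|$ delivers only $O(\eta^{2})/(O(1)+O(\eta))$, losing one factor to the $O(1)$ bound on $\|(\cH^o_{\eta}+\eta\cL)^{-1}\|$. Recovering the missing factor requires keeping the product intact, transforming to the Fourier basis so that the zero top block of $\cJ^{2}\cwb^o_{\eta}$ kills the $(1,1)$ block of the inverse, and then exploiting the cancellation $\eta\cG(\Lambda_o\otimes I_M)=I-\cK$. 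Aligning these pieces so the two $(O(1)+O(\eta))^{-1}$ gains surface simultaneously is the main technical obstacle.
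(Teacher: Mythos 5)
Your proposal follows essentially the same route as the paper's proof in Appendix~\ref{app: proof of dimension of the bias}: factor one power of $\mu$ out of the inverse in~\eqref{eq: recursion 2} to obtain $\cwt_{\infty}=\mu\eta^2\left[\cH_{\infty}+\eta\cL-\mu\eta\cL\cH_{\infty}\right]^{-1}\cL^2\cw^o_{\eta}$, pass to the graph Fourier basis, and use the block (Schur-complement) structure of the transformed matrix together with Lemma~\ref{lemm: transformed w o eta} to extract the two factors of $(O(1)+O(\eta))^{-1}$ --- one from the lower-right block of the inverse and one from $\|\cK\|$ via~\eqref{eq: order of K 22}. Two remarks. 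First, $\lim_{\mu\rightarrow 0}\cH_{\infty}$ is $\diag\{\nabla^2_{w_k}J_k(w^o_{k,\eta})\}$, not $\cH^o_{\eta}$: each $H_{k,\infty}$ is the mean-value Hessian between $w_{k,\infty}$ and $w^o_{k,\eta}$ and $\wt_{k,\infty}\rightarrow 0$, whereas $H^o_{k,\eta}$ averages the Hessian between $w^o_k$ and $w^o_{k,\eta}$. Consequently the exact identity $\eta\,\cG(\Lambda_o\otimes I_M)=I_{M(N-1)}-\cK$, which ties together the $\cG$ and $\cK$ of Lemma~\ref{lemm: transformed w o eta}, does not literally hold for the Schur complement of your limiting matrix. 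Second, that cancellation is not actually needed: the Schur-complement inverse itself (the paper's $\cT$) already satisfies $\|\cT\|\leq\left(\eta\lambda_2(L)+\min_{k}\lambda_{k,\min}\right)^{-1}=(O(1)+O(\eta))^{-1}$ by the same Weyl/Schur argument as in~\eqref{eq: argument 2 G22}--\eqref{eq: bound relation a 2}, so plain submultiplicativity on $\|\cT\|\,\|\Lambda_o^2\|\,\|\cK\|$ delivers both factors; this is exactly what the paper does and it sidesteps the issue above. The genuinely essential insight, which you have, is that the naive bound $\|(\cH_{\infty}+\eta\cL)^{-1}\|\,\|\cL^2\cw^o_{\eta}\|$ loses a factor because the $O(1)$ norm of the inverse is attained only along $v_1$, which is annihilated in $\cL^2\cw^o_{\eta}$; the Fourier-domain block decomposition is what exposes this.
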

\begin{proof}
See Appendix~\ref{app: proof of dimension of the bias}.
\end{proof}

\subsubsection{Evolution of the stochastic recursion}
We now examine how close the stochastic algorithm~\eqref{eq: distributed algorithm} approaches $\cw^o_{\eta}$. First, we introduce the mean-square perturbation vector (MSP) at time $i$ relative to $\cw_{\infty}$:
\begin{equation}
\text{MSP}_i\triangleq\col\left\{\expec\|w_{k,\infty}-\bw_{k,i}\|^2\right\}_{k=1}^N.
\end{equation}
The $k$-th entry of $\text{MSP}_i$ characterizes how far away the estimate $\bw_{k,i}$ at agent $k$ and time $i$ is from $w_{k,\infty}$.
\begin{theorem}{\emph{(Network mean-square-error stability)}}
\label{theo: dimension of the MSP}
Under Assumptions~\ref{assumption: strong convexity},~\ref{assumption: gradient noise}, and~\ref{assumption: combination matrix}, the \emph{MSP} at time $i$ can be recursively bounded as:
\begin{equation}
\label{eq: evolution of the MSP i}
\emph{MSP}_i \preceq (I_N-\mu\eta L)\,G\,\emph{MSP}_{i-1}+\mu^2 (I_N-\mu\eta L)b,
\end{equation}
where:
\begin{align}
G&\triangleq{\emph\diag}\left\{\gamma_k^2+3\mu^2\beta^2_k\right\}_{k=1}^N,\\
\label{eq: equation of b}b&\triangleq{\emph\col}\left\{\sigma^2_{s,k}+3\beta_k^2\|w^o_{k,\eta}\|^2+ 3\beta_k^2\|w^o_{k,\eta}-w_{k,\infty}\|^2\right\}_{k=1}^N.
\end{align}
A sufficient condition for the stability of the above recursion is:
\begin{equation}
\label{eq: MSP stability}
0<\mu<\min_{1\leq k\leq N}{\left\{\min\left\{\frac{2\lambda_{k,\min}}{\lambda_{k,\min}^2+3\beta_k^2},\frac{2\lambda_{k,\max}}{\lambda_{k,\max}^2+3\beta_k^2}\right\}\right\}}.
\end{equation}
It follows that
\begin{equation}
\label{eq: steady-state of the MSP}
\|\limsup_{i\rightarrow\infty}{\emph{MSP}}_i\|_{\infty}=O(\mu),
\end{equation}
and
\begin{equation}
\label{eq: steady-state of the second order}
\limsup_{i\rightarrow\infty}\expec\|\cw^o_{\eta}-\bcw_i\|^2=O(\mu)+\frac{O(\mu^2\eta^4)}{(O(1)+O(\eta))^{4}}= O(\mu).
\end{equation}
\end{theorem}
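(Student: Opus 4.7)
The plan is to work with the stochastic error vector $\widetilde{\bcw}_i \triangleq \cw_\infty - \bcw_i$. I would first derive a noisy analogue of~\eqref{eq: recursion 2}. Applying the mean-value theorem as in~\eqref{eq: mean value theorem for the error recursion} about $\cw_\infty$ and introducing the time-varying block-diagonal Hessian $\mathcal{H}_{i-1} \triangleq \diag\{H_{k,i-1}\}_{k=1}^N$ built from the random iterates, and then subtracting the deterministic fixed-point identity~\eqref{eq: recursion 1} from the stochastic network recursion~\eqref{eq: network vector recursion}, yields
\begin{equation*}
\widetilde{\bcw}_i = (I_{MN}-\mu\eta\cL)(I_{MN}-\mu \mathcal{H}_{i-1})\widetilde{\bcw}_{i-1} - \mu(I_{MN}-\mu\eta\cL)\,\bs_i(\bcw_{i-1}),
\end{equation*}
where $\bs_i(\bcw_{i-1}) = \col\{\bs_{k,i}(\bw_{k,i-1})\}_{k=1}^N$ aggregates the per-agent gradient noise.

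Next I would convert this block identity into the componentwise $\text{MSP}$ recursion~\eqref{eq: evolution of the MSP i}. The key observation is that, under Assumption~\ref{assumption: combination matrix}, the $N\times N$ matrix $I_N-\mu\eta L$ has nonnegative entries and row sums equal to one; using $\cL = L\otimes I_M$, the $k$-th $M$-block of the right-hand side above is a convex combination of $N$ sub-vectors. Applying Jensen's inequality to $\|\cdot\|^2$ gives
\begin{equation*}
\|\widetilde{\bw}_{k,i}\|^2 \leq \sum_{\ell=1}^N [I_N-\mu\eta L]_{k\ell}\, \bigl\|(I_M - \mu H_{\ell,i-1}) \widetilde{\bw}_{\ell,i-1} - \mu \bs_{\ell,i}(\bw_{\ell,i-1}) \bigr\|^2.
\end{equation*}
Conditioning on $\bcF_{i-1}$ kills the cross term via~\eqref{eq: mean gradient noise condition}; Assumption~\ref{assumption: strong convexity} gives $\|I_M - \mu H_{\ell,i-1}\|^2 \leq \gamma_\ell^2$, and~\eqref{eq: condition on second-order moment of gradient noise} bounds $\expec[\|\bs_{\ell,i}\|^2|\bcF_{i-1}]$ by $\beta_\ell^2 \|\bw_{\ell,i-1}\|^2 + \sigma_{s,\ell}^2$. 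I would then split $\|\bw_{\ell,i-1}\|^2 \leq 3\|w^o_{\ell,\eta}\|^2 + 3\|w^o_{\ell,\eta} - w_{\ell,\infty}\|^2 + 3\|\widetilde{\bw}_{\ell,i-1}\|^2$; this three-way decomposition is exactly what produces the entries of $G$ and $b$. Taking full expectation and collecting across $k$ yields~\eqref{eq: evolution of the MSP i}.

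For stability, since $G$ is diagonal with nonnegative entries and $I_N-\mu\eta L$ is row-stochastic with nonnegative entries, $\|(I_N-\mu\eta L) G\|_\infty \leq \max_k (\gamma_k^2 + 3\mu^2 \beta_k^2)$. Requiring this to be strictly less than one, after expanding $\gamma_k^2 = \max\{(1-\mu\lambda_{k,\min})^2, (1-\mu\lambda_{k,\max})^2\}$, yields precisely~\eqref{eq: MSP stability}. Under this condition, iterating~\eqref{eq: evolution of the MSP i} and passing to the limit gives
\begin{equation*}
\limsup_{i\to\infty}\, \text{MSP}_i \preceq \mu^2 \bigl(I_N - (I_N-\mu\eta L) G\bigr)^{-1} (I_N-\mu\eta L)\, b.
\end{equation*}
The diagonal entries of $I_N - G$ behave like $2\mu\lambda_{k,\min} - O(\mu^2)$, so for small $\mu$ the inverse is of order $O(1/\mu)$ componentwise, and the $\mu^2$ prefactor delivers~\eqref{eq: steady-state of the MSP}.

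To reach~\eqref{eq: steady-state of the second order}, I would invoke the triangle-style decomposition~\eqref{eq: bound on mean-square expectation}: its second term is at most $N\|\limsup_i \text{MSP}_i\|_\infty = O(\mu)$, while Theorem~\ref{theo: dimension of the bias} controls the first by $O(\mu^2\eta^4)/(O(1)+O(\eta))^4$; adding these yields the claimed $O(\mu)$ rate. The main technical obstacle will be the second step: one must carefully track measurability with respect to $\bcF_{i-1}$ so that the cross term in Jensen's expansion vanishes cleanly using~\eqref{eq: uncorrelated gradient noises}, and the three-way split of $\|\bw_{\ell,i-1}\|^2$ must be performed \emph{after} invoking the noise-moment bound, so that the constant $3\beta_\ell^2$ feeds correctly both into the diagonal growth matrix $G$ and into the driving vector $b$ in~\eqref{eq: equation of b}; any coarser decomposition would obscure the precise coefficients required for the sharp $O(\mu)$ conclusion.
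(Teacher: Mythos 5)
Your proposal is correct and follows essentially the same route as the paper's proof: the same error recursion via the mean-value theorem around $\cw_\infty$, the same Jensen's-inequality reduction using the row-stochastic matrix $I_N-\mu\eta L$, the same three-way split of $\|\bw_{\ell,i-1}\|^2$ producing $G$ and $b$, the same $\|\cdot\|_\infty$ stability argument, and the same final decomposition via~\eqref{eq: bound on mean-square expectation} and Theorem~\ref{theo: dimension of the bias}. One tiny slip in your closing remark: the cross term vanishes by the zero-mean condition~\eqref{eq: mean gradient noise condition} (as you correctly state in the main derivation), not by the cross-agent uncorrelatedness~\eqref{eq: uncorrelated gradient noises}, which is not needed here since Jensen's inequality has already reduced everything to per-agent norms.
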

\begin{proof}
See Appendix~\ref{app: proof of dimension of the MSP}. {With regards to~\eqref{eq: steady-state of the second order} note first that for fixed $\eta$, we have $O(\mu)+O(\mu^2)=O(\mu)$. When $\eta$ and $\mu$ are coupled ($\eta=\mu^{-\epsilon}$), we obtain:
$$O(\mu)+\frac{O(\mu^{2-4\epsilon})}{O(1)+O(\mu^{-4\epsilon})}.$$
For $\epsilon<0$, $O(1)$ dominates $O(\mu^{-4\epsilon})$ in the denominator and we obtain $O(\mu)+O(\mu^{2-4\epsilon})=O(\mu)$. For $\epsilon>0$, $O(\mu^{-4\epsilon})$ dominates $O(1)$ in the denominator and we obtain $O(\mu)+O(\mu^2)=O(\mu)$.}

\end{proof}

\subsection{Stability of Fourth-Order Error Moment}
{The results so far establish that the iterates $\bw_{k,i}$ converge to a small $O(\mu)-$ neighborhood around the regularized solution $w^o_{k,\eta}$. We can be more precise and determine the size of this neighborhood, i.e., assess the size of the constant multiplying $\mu$ in the $O(\mu)-$term. To do so, we shall derive {in Part II~\cite{nassif2018diffusion}}  an accurate first-order expression for the mean-square error~\eqref{eq: steady-state of the second order}; the expression will be accurate to first-order in $\mu$. This expression will be useful because it will allow us to highlight several features of the limiting point of the network as a function of the parameter $\eta$.}

{To arrive at the desired expression, we first need to introduce a long-term approximation model and assess how close it is to the actual model. We then derive the performance for the long-term model and use this closeness to transform this result into an accurate expression for the performance of the original learning algorithm. When this argument is concluded we arrive at the desired performance expression, which we then use to comment on the behavior of the algorithm in a more informed manner. To derive the long-term model, we shall follow the approach developed in~\cite{sayed2014adaptation}. The first step is to establish the asymptotic stability of the}  fourth-order moment of the error vector, $\expec\|\cw_{\eta}^o-\bcw_{i}\|^4$.  {This property is needed to justify the validity of the long-term approximate model {that will be introduced in  Part II~\cite{nassif2018diffusion}}.}

To establish the fourth-order stability, we {replace condition~\eqref{eq: condition on second-order moment of gradient noise} on the gradient noise process by the following condition on its fourth order moment:}
\begin{equation}
\label{eq: condition on fourth-order moment of gradient noise}
\expec\left[\|\bs_{k,i}(\bw_k)\|^4|\bcF_{i-1}\right]\leq\overline{\beta}_k^4\|\bw_k\|^4+\overline{\sigma}_{s,k}^4,
\end{equation}
for some $\overline{\beta}_k^4\geq 0$,  and $\overline{\sigma}_{s,k}^4\geq 0$. As explained in~\cite{sayed2014adaptation}, condition~\eqref{eq: condition on fourth-order moment of gradient noise} {implies~\eqref{eq: condition on second-order moment of gradient noise} and, likewise, condition~\eqref{eq: condition on fourth-order moment of gradient noise} holds for important cases of interest.}

Exploiting the convexity of the norm functions $\|x\|^4$ and $\|x\|^2$ and using Jensen's inequality, we can write:
\begin{equation}
\label{eq: fourth order bound 1}
\expec\|\cw_{\eta}^o-\bcw_{i}\|^4\leq 8\|\cw_{\eta}^o-\cw_{\infty}\|^4+8\expec\|\cw_{\infty}-\bcw_{i}\|^4,
\end{equation}
and
\begin{equation}
\label{eq: fourth order bound 2}
\begin{split}
\expec\|\cw_{\infty}-\bcw_{i}\|^4=\expec\left(\|\cw_{\infty}-\bcw_{i}\|^2\right)^2=\expec\left(\sum_{k=1}^N\|w_{k,\infty}-\bw_{k,i}\|^2\right)^2&=N^2\expec\left(\sum_{k=1}^N\frac{1}{N}\|w_{k,\infty}-\bw_{k,i}\|^2\right)^2\\
&\leq N\sum_{k=1}^N\expec\|w_{k,\infty}-\bw_{k,i}\|^4.
\end{split}
\end{equation}
Let us introduce the mean-fourth perturbation vector at time $i$ relative to $\cw_{\infty}$:
\begin{equation}
\text{MFP}_i\triangleq\col\left\{\expec\|w_{k,\infty}-\bw_{k,i}\|^4\right\}_{k=1}^N.
\end{equation}

\begin{theorem}{\emph{(Fourth-order error moment stability)}}
\label{theo: dimension of the MFP}
Under Assumptions~\ref{assumption: strong convexity},~\ref{assumption: gradient noise},~\ref{assumption: combination matrix}, and condition~\eqref{eq: condition on fourth-order moment of gradient noise}, the \emph{MFP} at time $i$ can be recursively bounded as:
\begin{equation}
\label{eq: evolution of the MFP i}
\emph{MFP}_i\preceq (I_N-\mu\eta L)G'\emph{MFP}_{i-1}+\mu^2(I_N-\mu\eta L)B\emph{MSP}_{i-1}+\mu^4(I_N-\mu\eta L)b',
\end{equation}
where
\begin{align}
G'&\triangleq\emph{\diag}\left\{\gamma_k^4+24\mu^2\gamma_k^2\beta_k^2+81\mu^4\overline{\beta}_k^4\right\}_{k=1}^N,\\
B&\triangleq8\gamma_k^2\emph{\diag}\left\{\sigma^2_{s,k}+3\beta_k^2\|w^o_{k,\eta}\|^2+3\beta_k^2\|w^o_{k,\eta}-w_{k,\infty}\|^2\right\}_{k=1}^N,\\
b'&\triangleq\emph{\col}\left\{3\overline{\sigma}^4_{s,k}+81\overline{\beta}_k^4\|w^o_{k,\eta}\|^4+81\overline{\beta}_k^4\|w^o_{k,\eta}-w_{k,\infty}\|^4\right\}_{k=1}^N.
\end{align}
A sufficiently small $\mu$ ensures the stability of the above recursion. It follows that
\begin{equation}
\label{eq: steady-state of the MFP i}
\|\limsup_{i\rightarrow\infty}{\emph{MFP}}_i\|_{\infty}=O(\mu^2),
\end{equation}
and
\begin{equation}
\label{eq: steady-state of the fourth order}
\limsup_{i\rightarrow\infty}\expec\|\cw^o_{\eta}-\bcw_i\|^4=O(\mu^2)+\frac{O(\mu^4\eta^8)}{(O(1)+O(\eta))^{8}}= O(\mu^2).
\end{equation}
\end{theorem}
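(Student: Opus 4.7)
The plan is to mirror the proof of Theorem~\ref{theo: dimension of the MSP}, but work at the level of fourth powers and exploit condition~\eqref{eq: condition on fourth-order moment of gradient noise} in addition to~\eqref{eq: condition on second-order moment of gradient noise}. First I would subtract the fixed-point equation~\eqref{eq: recursion 1} from the stochastic update~\eqref{eq: network vector recursion}, and, following~\eqref{eq: mean value theorem for the error recursion}, apply the mean-value theorem to express the per-agent error $\widetilde{\bw}_{k,i}\triangleq w_{k,\infty}-\bw_{k,i}$ in the combination form
\begin{equation}
\widetilde{\bw}_{k,i}=\sum_{\ell=1}^N c_{k\ell}\Big[(I_M-\mu H_{\ell,i-1})\widetilde{\bw}_{\ell,i-1}+\mu\bs_{\ell,i}(\bw_{\ell,i-1})\Big],
\end{equation}
where $c_{k\ell}\geq 0$ are the entries of $C=I_N-\mu\eta L$, which is doubly stochastic under Assumption~\ref{assumption: combination matrix} (so $\sum_\ell c_{k\ell}=1$), $H_{\ell,i-1}$ is the path-averaged Hessian, and $\|I_M-\mu H_{\ell,i-1}\|\leq\gamma_\ell$ by~\eqref{eq: gamma_k}.

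Next I would take the fourth power of the norm and invoke Jensen's inequality on the convex map $\|\cdot\|^4$ along row $k$ of $C$ to get $\|\widetilde{\bw}_{k,i}\|^4\leq\sum_\ell c_{k\ell}\big\|(I_M-\mu H_{\ell,i-1})\widetilde{\bw}_{\ell,i-1}+\mu\bs_{\ell,i}\big\|^4$. For each summand, I would expand
\begin{equation}
\|a+\mu b\|^4=\big(\|a\|^2+2\mu a^\top b+\mu^2\|b\|^2\big)^2,
\end{equation}
with $a=(I_M-\mu H_{\ell,i-1})\widetilde{\bw}_{\ell,i-1}$ and $b=\bs_{\ell,i}$, then take $\expec[\cdot|\bcF_{i-1}]$ and use~\eqref{eq: mean gradient noise condition} to kill the odd-order terms in $\bs_{\ell,i}$. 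The surviving even cross term $2\mu^2\|a\|^2\|b\|^2$ is bounded through~\eqref{eq: condition on second-order moment of gradient noise}, while the $\mu^4\|b\|^4$ term is bounded through~\eqref{eq: condition on fourth-order moment of gradient noise}. The gradient-noise bounds are then rewritten in terms of $w_{k,\eta}^o$ by splitting $\bw_{k,i-1}=w_{k,\eta}^o-(w_{k,\eta}^o-w_{k,\infty})-\widetilde{\bw}_{k,i-1}$, which yields $\|\bw_{k,i-1}\|^2\leq 3\|w_{k,\eta}^o\|^2+3\|w_{k,\eta}^o-w_{k,\infty}\|^2+3\|\widetilde{\bw}_{k,i-1}\|^2$ and the analogous bound for the fourth power (with constant~$27$). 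Collecting terms gives a clean $\gamma_\ell^4+24\mu^2\gamma_\ell^2\beta_\ell^2+81\mu^4\bar\beta_\ell^4$ coefficient on $\expec\|\widetilde{\bw}_{\ell,i-1}\|^4$ (the diagonal of $G'$), a $\mu^2$ coupling to $\expec\|\widetilde{\bw}_{\ell,i-1}\|^2$ (the $B\,\text{MSP}_{i-1}$ term), and a residual $\mu^4 b'$ driving term. Applying the identity once more to realize the factor $c_{k\ell}$ as the $(k,\ell)$-entry of $I_N-\mu\eta L$ produces precisely~\eqref{eq: evolution of the MFP i}.

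Stability and the $O(\mu^2)$ conclusion then follow the same template as Theorem~\ref{theo: dimension of the MSP}. Since $I_N-\mu\eta L$ is doubly stochastic with $\|\cdot\|_\infty=1$, and since $\gamma_k^4=1-4\mu\lambda_{k,\min}+O(\mu^2)$, the diagonal entries of $G'$ are strictly less than one for $\mu$ small enough, making $(I_N-\mu\eta L)G'$ an $\ell_\infty$-contraction with gap $\Theta(\mu)$. Iterating~\eqref{eq: evolution of the MFP i} to its steady state gives
\begin{equation}
\limsup_{i\rightarrow\infty}\text{MFP}_i\preceq\big[I_N-(I_N-\mu\eta L)G'\big]^{-1}(I_N-\mu\eta L)\Big[\mu^2 B\limsup_{i\to\infty}\text{MSP}_i+\mu^4 b'\Big],
\end{equation}
and combining the $O(\mu^{-1})$ scaling of the inverse with the $\|\limsup\text{MSP}_i\|_\infty=O(\mu)$ bound from Theorem~\ref{theo: dimension of the MSP} yields $\|\limsup\text{MFP}_i\|_\infty=O(\mu^2)$, proving~\eqref{eq: steady-state of the MFP i}. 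Plugging this together with the bias bound of Theorem~\ref{theo: dimension of the bias} (raised to the fourth power) into the split~\eqref{eq: fourth order bound 1}--\eqref{eq: fourth order bound 2} delivers~\eqref{eq: steady-state of the fourth order}.

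The main obstacle will be the expansion and bookkeeping in the middle step. Using the crude $\|a+\mu b\|^4\leq 8\|a\|^4+8\mu^4\|b\|^4$ would inflate the leading coefficient to $8\gamma_\ell^4$, destroying stability for small $\mu$; to recover exactly $\gamma_\ell^4+O(\mu^2)$ on the diagonal I must keep the $\|a\|^4$ and $\|a\|^2\|b\|^2$ terms separate and let the $\|a\|^2\|b\|^2$ piece couple into $\text{MSP}_{i-1}$ rather than back into $\text{MFP}_{i-1}$. Similarly, care is needed so that all dependence on $\bw_{k,i-1}$ in the noise bounds is re-expressed purely in terms of $w_{k,\eta}^o$, $w_{k,\eta}^o-w_{k,\infty}$, and $\widetilde{\bw}_{k,i-1}$, otherwise the resulting recursion does not close cleanly into the form~\eqref{eq: evolution of the MFP i}.
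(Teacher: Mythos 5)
Your overall architecture matches the paper's proof: the same error recursion and Jensen step along the rows of $C=I_N-\mu\eta L$, the same splitting of the noise bounds around $w^o_{k,\eta}$ and $w_{k,\infty}$ (with constants $3$ and $27$), the same routing of the $\|a\|^2\|b\|^2$ cross term partly into the $24\mu^2\gamma_k^2\beta_k^2$ correction of $G'$ and partly into the $B\,\mathrm{MSP}_{i-1}$ coupling, and the same contraction-with-gap-$\Theta(\mu)$ argument for the steady state. However, there is a genuine gap in your middle step. You expand $\|a+\mu b\|^4=(\|a\|^2+2\mu a^\top b+\mu^2\|b\|^2)^2$ and claim that condition~\eqref{eq: mean gradient noise condition} ``kills the odd-order terms in $\bs_{\ell,i}$.'' The expansion contains two odd terms: $4\mu\|a\|^2(a^\top b)$, which does vanish under $\expec[\cdot\,|\,\bcF_{i-1}]$ because $a$ is $\bcF_{i-1}$-measurable and $b$ is conditionally zero-mean, and $4\mu^3(a^\top b)\|b\|^2$, which does \emph{not}: it is a third-order moment of the gradient noise, and $\expec[(a^\top b)\|b\|^2\,|\,\bcF_{i-1}]\neq 0$ in general (no symmetry of the noise distribution is assumed). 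This term is salvageable --- e.g.\ bound $|(a^\top b)|\,\|b\|^2\leq\|a\|\,\|b\|^3$ and apply Young's inequality to fold it into the $\|a\|^2\|b\|^2$ and $\|b\|^4$ buckets --- but then the constants $24$ and $81$ in $G'$ and $b'$ change, and the step as you wrote it is incorrect. The paper avoids the issue entirely by invoking the ready-made inequality $\|a+b\|^4\leq\|a\|^4+3\|b\|^4+8\|a\|^2\|b\|^2+4\|a\|^2(a^\top b)$ from~\cite[pp.~523]{sayed2014adaptation}, whose only odd term is linear in $b$ and hence genuinely annihilated by~\eqref{eq: mean gradient noise condition}; you should substitute that inequality for your raw expansion.

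A second, more minor point: your closed-form steady-state bound
$\limsup_i\mathrm{MFP}_i\preceq[I_N-(I_N-\mu\eta L)G']^{-1}(I_N-\mu\eta L)[\mu^2 B\limsup_i\mathrm{MSP}_i+\mu^4 b']$
silently passes the limsup of $\mathrm{MSP}$ inside the convolution sum $\sum_j (CG')^j CB\,\mathrm{MSP}_{i-1-j}$, which is not immediate because $\mathrm{MSP}_j$ is only \emph{eventually} $O(\mu)$. The paper justifies this by splitting the sum at a finite index $j_o$ beyond which $\|\mathrm{MSP}_j\|_\infty\leq s_{\max}=O(\mu)$ and showing the head of the sum is damped by $(CG')^i\to 0$. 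You should include that splitting (or an equivalent dominated-convergence style argument) to make the final step rigorous.
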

\begin{proof}
See Appendix~\ref{app: proof of dimension of the MFP}.
\end{proof}

\subsection{Stability of First-order Error Moment}
{We next need to} examine the evolution of the mean-error vector $\expec(\cw_{\eta}^o-\bcw_{i})$. To establish the mean-stability, we need to introduce a smoothness condition on the Hessian matrices of the individual costs. This smoothness condition will be adopted in the next {Part II~\cite{nassif2018diffusion}} when we study the long term behavior of the network.
\begin{assumption}{\emph{(Smoothness condition on individual cost functions).}}
\label{assumption: local smoothness hessian}
It is assumed that each $J_k(w_k)$ satisfies a smoothness condition close to $w^o_{k,\eta}$, in that the corresponding Hessian matrix is Lipchitz continuous in the proximity of $w^o_{k,\eta}$ with some parameter $\kappa_d\geq 0$, i.e.,
\begin{equation}
\label{eq: hessian smoothness condition}
\|\nabla^2_{w_k}J_k(w^o_{k,\eta}+\Delta w_k)-\nabla^2_{w_k}J_k(w^o_{k,\eta})\|\leq \kappa_d\|\Delta w_k\|,
\end{equation}
for small perturbations $\|\Delta w_k\|\leq\epsilon$.
\qed
\end{assumption}
\noindent From the triangle inequality, we have:
\begin{equation}
\label{eq: triangle inequality}
\|\expec(\cw_{\eta}^o-\bcw_{i})\|\leq \|\cw_{\eta}^o-\cw_{\infty}\| + \|\expec(\cw_{\infty}-\bcw_{i})\|.
\end{equation}
Let us introduce the square-mean perturbation (SMP) vector at time $i$ relative to $\cw_{\infty}$:
\begin{equation}
\label{eq: steady-state of the MSP i}
\text{SMP}_i\triangleq\col\left\{\|\expec (w_{k,\infty}-\bw_{k,i})\|^2\right\}_{k=1}^N.
\end{equation}

\begin{theorem}{\emph{(First-order error moment stability)}}
\label{theo: dimension of the SMP}
Under Assumptions~\ref{assumption: strong convexity},~\ref{assumption: gradient noise},~\ref{assumption: combination matrix}, and~\ref{assumption: local smoothness hessian}, the \emph{SMP} at time $i$ can be recursively bounded as:
\begin{equation}
\label{eq: evolution of the SMP i}
\emph{SMP}_i\preceq (I_N-\mu\eta L)G''\emph{SMP}_{i-1}+\mu^2(I_N-\mu\eta L)(I_N-G'')^{-1}B'\emph{MSP}_{i-1}+\mu^2\frac{1}{2}(\kappa'_d)^2(I_N-\mu\eta L)(I_N-G'')^{-1}\emph{MFP}_{i-1}.
\end{equation}
where
\begin{align}
G''&\triangleq\emph{\diag}\left\{\gamma_k\right\}_{k=1}^N,\label{eq: equation of G''}\\
B'&\triangleq2(\kappa'_d)^2\emph{\diag}\left\{\|w^o_{k,\eta}-w_{k,\infty}\|^2\right\}_{k=1}^N,
\end{align}
with $\kappa'_d=\max\{\kappa_d,\frac{\lambda_{k,\max}-\lambda_{k,\min}}{\epsilon}\}$. Under condition~\eqref{eq: condition on fourth-order moment of gradient noise}, a sufficiently small $\mu$ ensures the stability of the above recursion. It follows that
\begin{equation}
\label{eq: steady-state of the SMP i}
\|\limsup_{i\rightarrow\infty}{\emph{SMP}}_i\|_{\infty}=O(\mu^2),
\end{equation}
and that
\begin{equation}
\label{eq: steady-state of the first order}
\limsup_{i\rightarrow\infty}\|\expec(\cw^o_{\eta}-\bcw_i)\|=O(\mu)+\frac{O(\mu\eta^2)}{(O(1)+O(\eta))^{2}}.
\end{equation}
\end{theorem}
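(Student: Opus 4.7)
The plan is to parallel the energy-argument already used in Theorems~\ref{theo: dimension of the MSP} and~\ref{theo: dimension of the MFP}, but now to track the \emph{norm of the expected} error $\expec(\cw_\infty-\bcw_i)$ rather than the expected norm, and then to combine the resulting bound with the deterministic bias estimate of Theorem~\ref{theo: dimension of the bias} through the triangle inequality~\eqref{eq: triangle inequality}.

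First, I would form the error vector $\cwt_i\triangleq \cw_\infty-\bcw_i$ by subtracting the fixed-point identity~\eqref{eq: recursion 1} from the stochastic recursion~\eqref{eq: network vector recursion}. The mean value theorem, applied as in~\eqref{eq: mean value theorem for the error recursion}, turns each gradient difference $\nabla_{w_k}J_k(w_{k,\infty})-\nabla_{w_k}J_k(\bw_{k,i-1})$ into $\widehat H_{k,i-1}\wt_{k,i-1}$, where $\widehat H_{k,i-1}$ is a random Hessian integrated over the segment between the two iterates. Because the contraction factor must ultimately be a \emph{deterministic} matrix, I would then use Assumption~\ref{assumption: local smoothness hessian} to decompose $\widehat H_{k,i-1}=\nabla^2_{w_k}J_k(w^o_{k,\eta})+\widehat D_{k,i-1}$, with $\|\widehat D_{k,i-1}\|\leq\kappa'_d\max\{\|\bw_{k,i-1}-w^o_{k,\eta}\|,\|w_{k,\infty}-w^o_{k,\eta}\|\}$; the enlarged constant $\kappa'_d=\max\{\kappa_d,(\lambda_{k,\max}-\lambda_{k,\min})/\epsilon\}$ ensures that the Lipschitz estimate~\eqref{eq: hessian smoothness condition} extends beyond the proximity ball of radius $\epsilon$, by falling back on the uniform Hessian bound of Assumption~\ref{assumption: strong convexity}. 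Taking conditional expectation with respect to $\bcF_{i-1}$ eliminates the gradient noise through~\eqref{eq: mean gradient noise condition}, and the outer expectation yields a deterministic recursion for $\expec\cwt_i$.

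Next, I would extract a per-agent scalar recursion for $\|\expec\wt_{k,i}\|^2$. The constant-Hessian part contributes the contraction rate $\gamma_k$ of Lemma~\ref{lem: fixed point convergence}, collected in $G''$ from~\eqref{eq: equation of G''}. To separate the constant-Hessian term from the perturbation $\widehat D_{k,i-1}\wt_{k,i-1}$ inside a squared norm, I would apply Young's inequality in the form $\|\gamma_k a+b\|^2\leq\gamma_k\|a\|^2+(1-\gamma_k)^{-1}\|b\|^2$, which simultaneously produces the $G''\,\text{SMP}_{i-1}$ term and the $(I_N-G'')^{-1}$ factor multiplying the remaining contributions in~\eqref{eq: evolution of the SMP i}. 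The perturbation itself splits into two pieces after applying $\bw_{k,i-1}-w^o_{k,\eta}=-\wt_{k,i-1}+(w_{k,\infty}-w^o_{k,\eta})$ to the Lipschitz bound: one involving $\|w_{k,\infty}-w^o_{k,\eta}\|^2\,\expec\|\wt_{k,i-1}\|^2$, which becomes the MSP term weighted by $B'$, and one involving $\expec\|\wt_{k,i-1}\|^4$, which becomes the MFP term weighted by $(\kappa'_d)^2/2$. Pre-multiplication by the nonnegative matrix $(I_N-\mu\eta L)$, whose nonnegativity follows from Assumption~\ref{assumption: combination matrix}, reproduces~\eqref{eq: evolution of the SMP i}.

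Stability of~\eqref{eq: evolution of the SMP i} follows because the spectral radius of $(I_N-\mu\eta L)G''$ is bounded by $\max_k\gamma_k<1$ under~\eqref{eq: condition 1}, while Theorems~\ref{theo: dimension of the MSP} and~\ref{theo: dimension of the MFP} together with the fourth-order noise condition~\eqref{eq: condition on fourth-order moment of gradient noise} guarantee bounded MSP and MFP inputs for sufficiently small $\mu$. To obtain~\eqref{eq: steady-state of the SMP i}, I would pass to steady state, invert $I_N-(I_N-\mu\eta L)G''$ (which introduces an $O(1/\mu)$ factor since $1-\gamma_k=\Theta(\mu)$), and substitute the sizes $\|\text{MSP}_\infty\|_\infty=O(\mu)$, $\|\text{MFP}_\infty\|_\infty=O(\mu^2)$, and $\|w^o_{k,\eta}-w_{k,\infty}\|^2=O(\mu^2)$ established earlier. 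The MFP-driven term dominates and delivers $O(1/\mu)\cdot\mu^2\cdot O(1/\mu)\cdot O(\mu^2)=O(\mu^2)$, proving~\eqref{eq: steady-state of the SMP i}. Taking square roots and combining with Theorem~\ref{theo: dimension of the bias} in the triangle inequality~\eqref{eq: triangle inequality} finally yields~\eqref{eq: steady-state of the first order}. The main obstacle will be the careful bookkeeping of $\widehat D_{k,i-1}$, so that $(\kappa'_d)^2$ is split correctly between the MSP contribution (through $B'$) and the MFP contribution, and neither term degrades the $O(\mu^2)$ rate as $\mu\to 0$.
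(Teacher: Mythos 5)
Your proposal is correct and follows essentially the same route as the paper's own proof: the same decomposition of the random Hessian around $\nabla^2_{w_k}J_k(w^o_{k,\eta})$ with the extended Lipschitz constant $\kappa'_d$, the same Young/Jensen step with parameter $\gamma_k$ producing $G''$ and the $(I_N-G'')^{-1}$ factor, the same split of the perturbation into MSP- and MFP-driven terms, and the same triangle-inequality conclusion via Theorem~\ref{theo: dimension of the bias}. The only cosmetic difference is that the paper bounds the Hessian perturbation as $\kappa'_d\|w^o_{k,\eta}-w_{k,\infty}\|+\tfrac{1}{2}\kappa'_d\|w_{k,\infty}-\bw_{k,i-1}\|$ rather than via a max, which does not change the argument.
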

\begin{proof}
See Appendix~\ref{app: proof of dimension of the SMP}.
\end{proof}

{We have established so far the stability of the mean-error process, $\expec(\cw^o_\eta-\bcw_i)$, the mean-square-error $\expec\|\cw^o_\eta-\bcw_i\|^2$, and the fourth order moment $\expec\|\cw^o_\eta-\bcw_i\|^4$. Building on these results, we will derive in Part II~\cite{nassif2018diffusion} closed form expressions for the steady-state performance of algorithm~\eqref{eq: distributed algorithm}. Section VI in Part II~\cite{nassif2018diffusion} will provide illustration for the theoretical results in this part (Theorems~\ref{theo: dimension of the bias},~\ref{theo: dimension of the MSP}, and~\ref{theo: dimension of the SMP}), and its accompanying Part II. }

\section{Simulation results with real dataset}
{In this section, we test algorithm~\eqref{eq: distributed algorithm} on a weather dataset corresponding to a collection of daily measurements (mean temperature, mean dew point, mean visibility, mean wind speed, maximum sustained wind speed, and rain or snow occurrence) taken from 2004 to 2017 at $N=139$ weather stations located around the continental United States~\cite{lawrimorenoaa}. We construct a representation graph $\cG=(\cN,\cE,A)$ for the stations using geographical distances between sensors. Each sensor corresponds to a node $k$ and is connected to $|\cN_k|$ neighbor nodes with undirected edges weighted according to $a_{k\ell}=\frac{1}{2}(p_{k\ell}+p_{\ell k})$ with~\cite{sandryhaila2013discrete}:
\begin{equation}
p_{k\ell}=\frac{e^{-d^2_{k\ell}}}{\sqrt{\sum_{m\in\cN_{k,0}}e^{-d^2_{k m}}\sum_{n\in\cN_{\ell,0}}e^{-d^2_{\ell n}}}},\quad\ell\in\cN_{k,0},
\end{equation}
where $\cN_{k,0}$ is the set of $4$-nearest neighbors of node $k$ and $d_{k\ell}$ denotes the geodesical distance between the $k$-th and  $\ell$-sensors -- see Fig.~\ref{fig: prediction rain} (left). Let $h_{k,i}\in\mathbb{R}^M$ denote the feature vector at sensor $k$ and day $i$ composed of $M=5$ entries corresponding to the mean temperature, mean dew point, mean visibility, mean wind speed, and maximum sustained wind speed reported at day $i$ at sensor $k$. Let $\gamma_{k}(i)$ denote a binary variable associated with the occurrence of rain (or snow) at node $k$ and day $i$, i.e, $\gamma_{k}(i)=1$ if rain (or snow) occurred and $\gamma_{k}(i)=-1$ otherwise. We would like to construct a classifier that allows us to predict whether it will rain (or snow) or not based on the knowledge of the feature vector $h_{k,i}$. In principle, each station could use an individual logistic regression machine~\cite{sayed2014adaptation,hosmer2013applied,theodoridis2008pattern}, that seeks a vector  $w^o_k$, such that $\widehat{\boldsymbol{\gamma}}_{k}(i)=\sign(\boldsymbol{h}_{k,i}^\top w^o_k)$ and 
\begin{equation}
w^o_k\triangleq\arg\min_{w_k}\expec\ln\left(1+e^{-\boldsymbol{\gamma}_k(i)\boldsymbol{h}_{k,i}^{\top}w_k}\right)+\rho\|w_k\|^2.
\end{equation}
In this application, however, it is expected that the decision rules $\{w^o_k\}$ at neighboring stations will be similar. In the experiment, the dataset is split into a training set used to learn the decision rule $w^o_k$, and a test set from which $\widehat{\boldsymbol{\gamma}}_{k}(i)$ are generated for performance evaluation. The first dataset comprises daily weather data recorded at the stations in the interval $2004-2012$ (a total number of $D_a=3288$ days) and the training set contains data recorded in the interval $2012-2017$ (a total number of $D_t=1826$ days). We set $\mu=3\cdot10^{-4}$ and $\rho=10^{-5}$. We generate the first iterate $w_{k,0}$ from the Gaussian distribution $\cN(0, I_M)$ and we run strategy~\eqref{eq: distributed algorithm} over the training set ($i=1,\ldots,D_a$) for different values of $\eta$. For each value of $\eta$, we report in Table~\ref{table: prediction error} the prediction error over the test set defined as: 
\begin{equation}
\label{eq: prediction error}
\frac{1}{N}\sum_{k=1}^{N}\frac{1}{D_t}\sum_{i=1}^{D_t=1826}\mathbb{I}[\sign(\boldsymbol{h}_{k,i}^\top \widehat{w}_{k,\infty})\neq\gamma_{k}(i)],
\end{equation}
where  $N=139$ is the number of nodes, $\widehat{w}_{k,\infty}$ is the average of the last 200 iterates generated by the algorithm at agent $k$, and $\mathbb{I}[x]$ is the indicator function at $x$, namely, $\mathbb{I}[x]=1$ if $x$ is true and 0 otherwise. 
Table~\ref{table: prediction error} shows that through cooperation, the agents improve performance. This is due to the fact that the non-cooperative solution ($\eta=0$) may suffer from a slow convergence rate~\cite[Section V-B]{nassif2018distributed} in which case  some nodes may not be able to converge in the finite dataset scenario. By increasing $\eta$, the convergence rate improves. However, a large value of $\eta$ (such as $\eta=\mu^{-1}$) yields a deterioration in the accuracy since in this case all agents converge approximately to the same classifier. By setting $\eta=45$, we obtain the smallest prediction error. We show in Fig.~\ref{fig: prediction rain} (right) the results of the prediction on July 30, 2015 across the US for $\eta=45$.
\begin{figure*}
\centering
\includegraphics[scale=0.45]{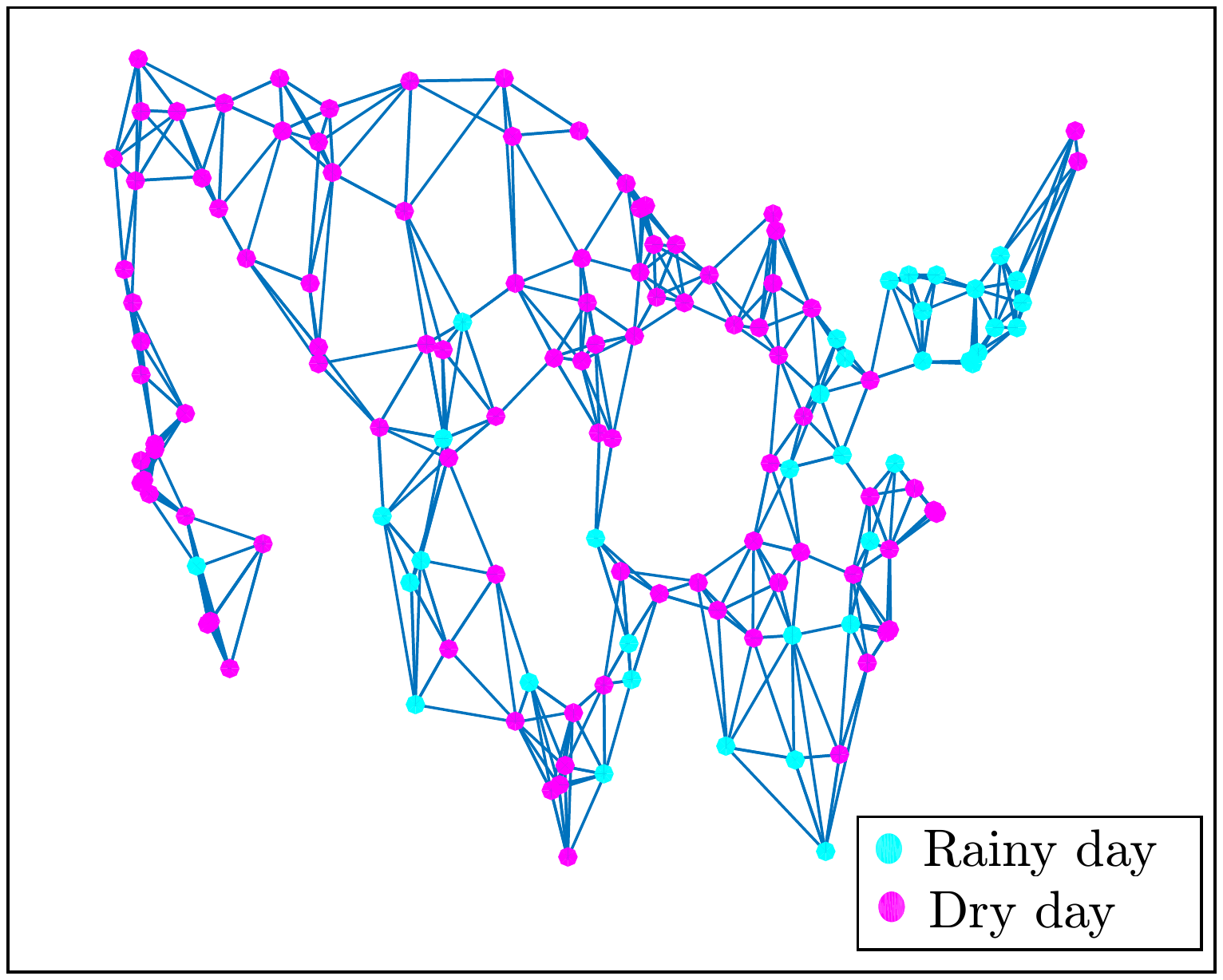}\hspace{1 cm}
\includegraphics[scale=0.45]{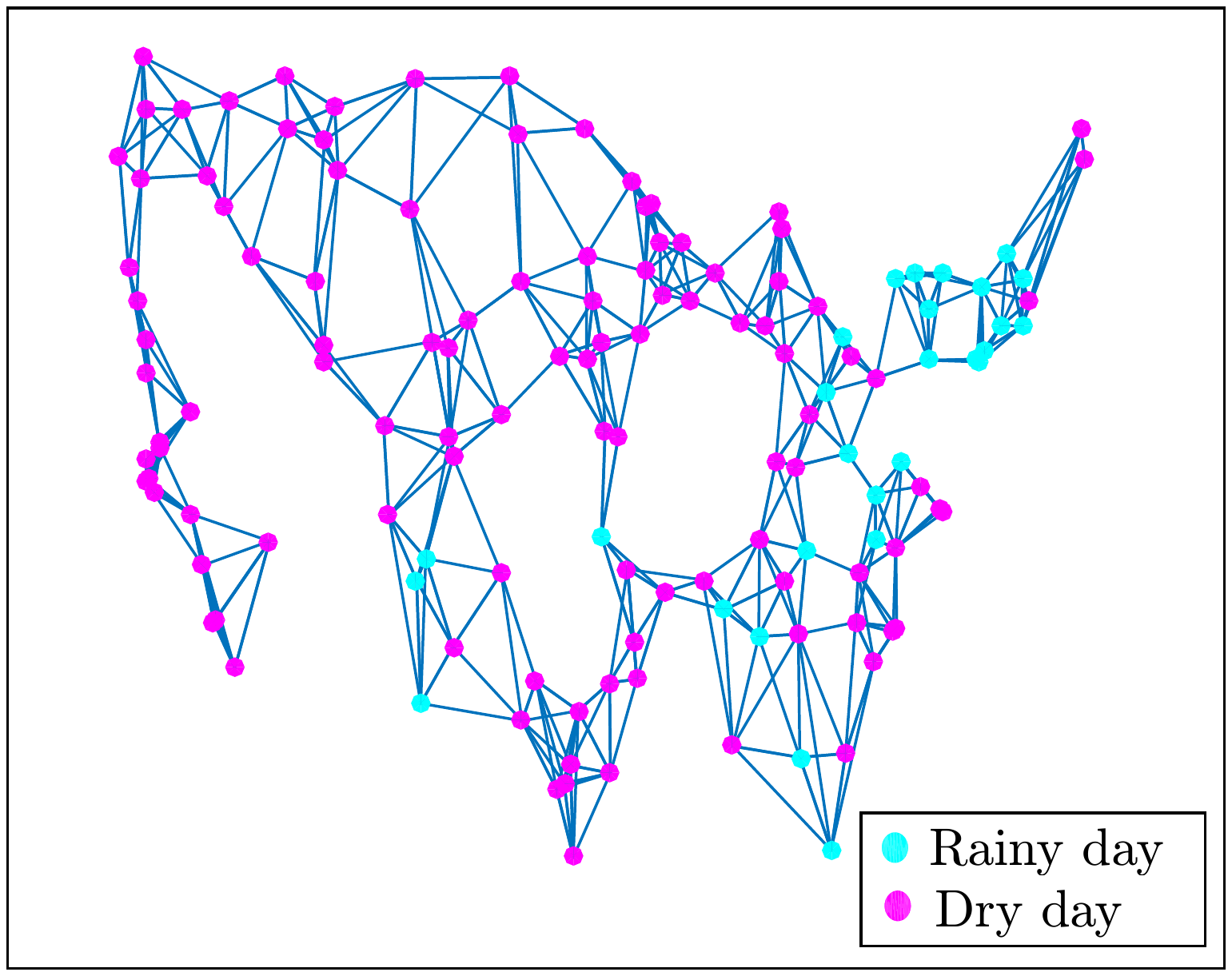}
\caption{\textit{(Left)} Occurrence of rain reported by 139 weather stations across the US on July 30, 2015. \textit{(Right)} Prediction of rain occurrence from weather data based on logistic regression and multitask learning.}
\label{fig: prediction rain}
\end{figure*}
\begin{table}
\caption{Rain prediction error~\eqref{eq: prediction error} in weather sensor networks for different values of regularization strength $\eta$.}
\begin{center}
 \begin{tabular}{c c c c c c c} 
 \hline\hline
 &$\eta=0$ & $\eta=10$ & $\eta=45$ & $\eta=100$ &$\eta=1000$&$\eta=\mu^{-1}$\\ [0.5ex] 
 \hline
prediction error &0.309 & 0.232 & \textbf{0.225} & 0.226&0.228&0.232 \\ [0.5ex] 
 \hline\hline
\end{tabular}
\end{center}
\label{table: prediction error}
\end{table}}

\section{Conclusion}
In this work, we considered multitask inference problems where agents in the network have individual parameter vectors to estimate subject to a smoothness condition over the graph. Based on diffusion adaptation, we proposed a strategy that allows the network to minimize a global cost consisting of the aggregate sum of the individual costs regularized by a term promoting smoothness. We showed that, for small step-size parameter, the network is able to approach the minimizer of the regularized problem to arbitrarily good accuracy levels. Furthermore, we showed how the regularization strength {can steer the convergence point of the network toward many modes starting from the non-cooperative mode and ending with the single-task mode. }

\begin{appendices}

\section{Proof of Lemma~\ref{lemm: transformed w o eta}}
\label{sec: Proof of Lemma transformed w o eta}
\noindent Consider the matrix inversion identity~\cite{kailath1980linear}:
\begin{equation}
\label{eq: matrix inversion identity}
(A+BCD)^{-1}=A^{-1}-A^{-1}B(C^{-1}+DA^{-1}B)^{-1}DA^{-1},
\end{equation}
which allows us to write:
\begin{equation}
\label{eq: matrix inversion U+V}
(U+W)^{-1}U=I-U^{-1}(I+WU^{-1})^{-1}W=I-(U+W)^{-1}W,
\end{equation}
for any invertible matrix $U$. Using~\eqref{eq: matrix inversion U+V}, we  write~\eqref{eq: GFT of w o eta} alternatively as:
\begin{equation}
\label{eq: GFT of w o eta 1}
\cwb^o_{\eta}=\left(I-\eta\left(\cHb_{\eta}^o+\eta\cJ\right)^{-1}\cJ\right)\cwb^o.
\end{equation}
Let
\begin{equation}
\cQ\triangleq\cHb^o_{\eta}+\eta\cJ.
\end{equation}
Using the definitions~\eqref{eq: definition cJ} and~\eqref{eq: definition cH o eta}, we can partition $\cQ$ into blocks:
\begin{equation}
\cQ=\left[\begin{array}{cc}
\cQ_{11}&\cQ_{12}\\
\cQ_{12}^\top&\cQ_{22}
\end{array}\right],
\end{equation}
with $Q_{11}$, $\cQ_{12}$, and $\cQ_{22}$ defined in~\eqref{eq: Q 11 lemma},~\eqref{eq: Q 12 lemma}, and~\eqref{eq: Q 22 lemma}, respectively.
Since $v_1=\frac{1}{\sqrt{N}}\mathds{1}_N$, we have $\cQ_{11}=\frac{1}{N}\sum_{k=1}^NH^o_{k,\eta}$ which is {positive definite} from Assumption~\ref{assumption: strong convexity}. Observe that $\cQ$ is invertible since it is similar to $\cH^o_{\eta}+\eta\cL$ which is positive definite under Assumption~\ref{assumption: strong convexity}. Now, by applying the block inversion formula to $\cQ$, we obtain:
\begin{align}
\label{eq: inverse of Q}
\cQ^{-1}&=(\cHb^o_{\eta}+\eta\cJ)^{-1}\nonumber\\
&=\left[\begin{array}{cc}
\cQ_{11}^{-1}+\cQ_{11}^{-1}\cQ_{12}\cG\cQ_{12}^{\top}\cQ_{11}^{-1}&-\cQ_{11}^{-1}\cQ_{12}\cG\\
-\cG\cQ_{12}^{\top}\cQ_{11}^{-1}&\cG
\end{array}\right],
\end{align}
where
\begin{equation}
\label{eq: G 22}\cG\triangleq(\cQ_{22}-\cQ_{12}^{\top}\cQ_{11}^{-1}\cQ_{12})^{-1}.
\end{equation}
Replacing~\eqref{eq: inverse of Q} into~\eqref{eq: GFT of w o eta 1} and using~\eqref{eq: definition cJ}, we arrive at:
\begin{equation}
\label{eq: c w o eta bar}
\cwb^o_{\eta}=\left[
\begin{array}{cc}
I_M&\eta\cQ_{11}^{-1}\cQ_{12}\cG\left(\Lambda_o\otimes I_M\right)\\
0& I_{M(N-1)}-\eta\,\cG\left(\Lambda_o\otimes I_M\right)
\end{array}
\right]\cwb^o.
\end{equation}
Using definition~\eqref{eq: K 22 lemma 1} into~\eqref{eq: c w o eta bar}, we conclude~\eqref{eq: GFT of w o eta 2}.
%

Now, we establish~\eqref{eq: order of K 22}. Let us first introduce the matrix $\cG'$:
\begin{equation}
\label{eq: cG'}
\cG'\triangleq(V_R^\top\otimes I_M)\cH_{\eta}^o(V_R\otimes I_M)-\cQ_{12}^{\top}\cQ_{11}^{-1}\cQ_{12}.
\end{equation}
Using the above definition {and expressions~\eqref{eq: G 22 lemma} and~\eqref{eq: Q 22 lemma}}, we can re-write the matrix $\cK$ in~\eqref{eq: K 22 lemma 1}  alternatively as:
\begin{align}
\cK&={I_{M(N-1)}-\eta\,((V_R^\top\otimes I_M)\cH^o_{\eta}(V_R\otimes I_M)+\eta\Lambda_o\otimes I_M-\cQ_{12}^\top\cQ_{11}^{-1}\cQ_{12})^{-1}\left(\Lambda_o\otimes I_M\right)}\notag\\
&={I_{M(N-1)}-\eta(\cG'+\eta \Lambda_o\otimes I_M)^{-1}(\Lambda_o\otimes I_M)}\notag\\
&\overset{\eqref{eq: matrix inversion U+V}}{=}(\cG'+\eta \Lambda_o\otimes I_M)^{-1}\cG'.	\label{eq: alternative K 22}
\end{align}
{The matrix $\cG'$ in~\eqref{eq: cG'} is the Schur complement of $\cHb^o_{\eta}$ in~\eqref{eq: definition cH o eta} which can be partitioned as:
\begin{equation}
\cHb^o_{\eta}=\left[\begin{array}{cc}
\cQ_{11}&\cQ_{12}\\
\cQ_{12}^\top&(V_R^\top\otimes I_M)\cH^o_{\eta}(V_R\otimes I_M)
\end{array}\right].
\end{equation}
Thus, $\cG'$} is positive definite since it is the Schur complement of the positive definite matrix $\cHb^o_{\eta}$~\cite[pp.~651]{boyd2004convex}. 
Since $\cG'$ is symmetric, from Weyl's inequality~\cite[pp.~239]{horn2003matrix} we have:
\begin{equation}
\label{eq: argument 2 G22}
0<\eta\lambda_2(L)+\lambda_{\min}(\cG')\leq\lambda_{\min}(\cG'+\eta \Lambda_o\otimes I_M)\leq\eta\lambda_2(L)+\lambda_{\max}(\cG').
\end{equation}
Furthermore, since {$\cG'$ is the Schur complement of the positive definite matrix $\cHb^o_{\eta}$}, we {have~\cite[Theorem~5]{smith1992some}}:
\begin{equation}
\label{eq: bound relation a 4}
\lambda_{\min}(\cG')\geq\lambda_{\min}(\cHb^o_{\eta})=\lambda_{\min}(\cH^o_{\eta})\geq\min_{1\leq k \leq N}\lambda_{k,\min},
\end{equation}
\begin{equation}
\label{eq: bound relation a 3}
\lambda_{\max}(\cG')\leq\lambda_{\max}(\cHb^o_{\eta})=\lambda_{\max}(\cH^o_{\eta})\leq\max_{1\leq k \leq N}\lambda_{k,\max}.
\end{equation}
Therefore, from~\eqref{eq: argument 2 G22} and~\eqref{eq: bound relation a 4}, we get:
\begin{equation}
\label{eq: argument 3 G22}
\lambda_{\min}(\cG'+\eta \Lambda_o\otimes I_M)\geq \eta\lambda_2(L)+\min_{1\leq k \leq N}\lambda_{k,\min},
\end{equation}
and
\begin{equation}
\label{eq: argument 4 G22}
\lambda_{\max}((\cG'+\eta \Lambda_o\otimes I_M)^{-1})=\frac{1}{\lambda_{\min}(\cG'+\eta \Lambda_o\otimes I_M)}\leq\left(\eta\lambda_2(L)+\min_{1\leq k \leq N}\lambda_{k,\min}\right)^{-1}.
\end{equation}
Since the $2-$induced norm of a positive definite matrix is equal to its maximum eigenvalue, we obtain:
\begin{equation}
\label{eq: bound relation a 2}
\|(\cG'+\eta \Lambda_o\otimes I_M)^{-1}\|\leq\left(\eta\lambda_2(L)+\min_{1\leq k \leq N}\lambda_{k,\min}\right)^{-1}.
\end{equation}
From the sub-multiplicative property of the $2-$induced norm and from~\eqref{eq: alternative K 22},~\eqref{eq: bound relation a 2}, and~\eqref{eq: bound relation a 3}, we obtain:
\begin{equation}
\|\cK\|\leq\|(\cG'+\eta \Lambda_o\otimes I_M)^{-1}\|\cdot\|\cG'\|\leq\left(\max_{1\leq k \leq N}\lambda_{k,\max}\right)\left(\eta\lambda_2(L)+\min_{1\leq k \leq N}\lambda_{k,\min}\right)^{-1}.
\end{equation}

\section{Proof of Lemma~\ref{lem: fixed point convergence}}
\label{app: proof of convergence to a fixed point}
\noindent Given any two input vectors $\cx^1$ and $\cx^2$ with corresponding updated vectors $\cy^1$ and $\cy^2$, we have from~\eqref{eq: deterministic mapping}:
\begin{equation}
\label{eq: difference deterministic mapping}
\cy^1-\cy^2=\left(I_{MN}-\mu\eta\cL\right)\left(\cx^1-\cx^2-\mu\,\col\left\{\nabla_{w_k}J_k(x^1_k)-\nabla_{w_k}J_k(x^2_k)\right\}_{k=1}^N\right).
\end{equation}
From the mean-value theorem~\cite[pp.~24]{polyak1987introduction}, we have:
\begin{equation}
\label{eq: mean value theorem}
\nabla_{w_k}J_k(x^1_k)-\nabla_{w_k}J_k(x^2_k)=\left(\int_0^1\nabla^2_{w_k}J_k(x^2_k+t(x^1_k-x^2_k))dt\right)(x^1_k-x^2_k).
\end{equation}
Using~\eqref{eq: mean value theorem} into~\eqref{eq: difference deterministic mapping}, and the sub-multiplicative property of the $2-$induced norm~\cite{sayed2014adaptation}, we obtain:
\begin{equation}
\|\cy^1-\cy^2\|\leq\|I_{MN}-\mu\eta\cL\|\|\cD\|\|\cx^1-\cx^2\|,
\end{equation}
where
\begin{equation}
\label{eq: block diagonal D}
\mathcal{D}\triangleq\diag\left\{I_{M}-\mu\int_0^1\nabla^2_{w_k}J_k(x^2_k+t(x^1_k-x^2_k))dt\right\}_{k=1}^N.
\end{equation}
 We have
\begin{equation}
\|I_{MN}-\mu\eta\cL\|=\|(I_N-\mu\eta L)\otimes I_M\|=\|I_N-\mu\eta L\|.
\end{equation}
Let $\rho(\cdot)$ denote the spectral radius of its matrix argument. Since $L$ is symmetric, we have $\|I_N-\mu\eta L\|=\rho(I_N-\mu\eta L)$. Since $L$ has one eigenvalue at zero,  $\rho(I_N-\mu\eta L)$ is guaranteed to be equal to 1 if $\mu\eta$ satisfies condition~\eqref{eq: condition for stability}. For the  block diagonal symmetric matrix $\mathcal{D}$ in~\eqref{eq: block diagonal D}, we have:
\begin{equation}
\|\cD\|=\max_{1\leq k\leq N}\left\|I_{M}-\mu\int_0^1\nabla^2_{w_k}J_k(x^2_k+t(x^1_k-x^2_k))dt\right\|.
\end{equation}
Due to Assumption~\ref{assumption: strong convexity}, we have:
\begin{equation}
0<\lambda_{k,\min}I_M\leq\int_0^1\nabla^2_{w_k}J_k(x^2_k+t(x^1_k-x^2_k))dt\leq\lambda_{k,\max}I_M.
\end{equation}
It follows that $\|\cD\|\leq\gamma$ where $\gamma\triangleq\max_{1\leq k\leq N}\{\gamma_k\}$ and $\gamma_k$ is given in~\eqref{eq: gamma_k}. It holds that $0<\gamma_k<1$ when $\mu$ is chosen according to~\eqref{eq: condition 1}. Combining the previous results, we arrive at:
\begin{equation}
\|\cy^1-\cy^2\|_{2}\leq\gamma\|\cx^1-\cx^2\|_{2},
\end{equation}
for $\gamma<1$ when~\eqref{eq: condition for stability} and~\eqref{eq: condition 1} are satisfied and, in this case, the deterministic mapping~\eqref{eq: deterministic mapping} is a contraction.

\section{Proof of Theorem~\ref{theo: dimension of the bias}}
\label{app: proof of dimension of the bias}
\noindent From~\eqref{eq: recursion 2}, we obtain the following expression for $\cwt_{\infty}$:
\begin{equation}
\cwt_{\infty}=\mu\eta^2[\mathcal{H}_{\infty}+\eta\cL-\mu\eta\cL\mathcal{H}_{\infty}]^{-1} \cL^2 \cw^o_{\eta}\label{eq: ss network error vector}.
\end{equation}
Pre-multiplying both sides of~\eqref{eq: ss network error vector} by $\cV^{\top}= V^\top\otimes I_M$ gives:
\begin{equation}
\label{eq: GFT of the bias}
\cwb_{\infty}=\mu\eta^2\left[\cHb_{\infty}+\eta\cJ-\mu\eta\cJ\cHb_{\infty}\right]^{-1}\cJ^2\,\cwb^o_{\eta},
\end{equation}
where  $\cwb_{\infty}\triangleq\cV^\top\cwt_{\infty}$, {$\cwb^o_{\eta}\triangleq\cV^\top\cw^o_{\eta}$},
\begin{equation}
\cHb_{\infty}\triangleq\cV^\top\cH_{\infty}\cV,
\end{equation}
and $\cJ$ is given by~\eqref{eq: definition cJ}.

In the following we show that $\cwb_{\infty}$ can be written as:
\begin{equation}
\label{eq: wb infty}
\cwb_{\infty}=\mu\eta^2\left[
\begin{array}{c}
-\cP_{11}^{-1}\cP_{12}\cT\\
\cT
\end{array}
\right](\Lambda_o^2\otimes I_M)\cK[\cwb^o]_{2:N},
\end{equation}
where $\cK$ is defined in~\eqref{eq: K 22 lemma 1} and:
\begin{align}
\cT&\triangleq(\cP_{22}-\cP_{21}\cP_{11}^{-1}\cP_{12})^{-1},\label{eq: cT}\\
\label{eq: P_11}\cP_{11}&\triangleq(v_1^\top\otimes I_M)\cH_{\infty}(v_1\otimes I_M){=\frac{1}{N}\sum_{k=1}^NH_{k,\infty}},\\
\label{eq: P_12}\cP_{12}&\triangleq(v_1^\top\otimes I_M)\cH_{\infty}(V_R\otimes I_M),\\
\label{eq: P_21}\cP_{21}&\triangleq((I_{N-1}-\mu\eta\Lambda_o)\otimes I_M)(V_R^\top\otimes I_M)\cH_{\infty}(v_1\otimes I_M),\\
\label{eq: P_22}\cP_{22}&\triangleq\eta\Lambda_o\otimes I_M+((I_{N-1}-\mu\eta\Lambda_o)\otimes I_M)(V_R^\top\otimes I_M)\cH_{\infty}(V_R\otimes I_M),
\end{align}
{We introduce the following matrix, which appears in~\eqref{eq: GFT of the bias}:}
\begin{equation}
\cP\triangleq(I_{MN}-\mu\eta\cJ)\cHb_{\infty}+\eta\cJ=\left[\begin{array}{cc}
\cP_{11}&\cP_{12}\\
\cP_{21}&\cP_{22}
\end{array}
\right],
\end{equation}
where the blocks $\{\cP_{ij}\}$ are given by~\eqref{eq: P_11}--\eqref{eq: P_22}. {Note that, under Assumption~\ref{assumption: strong convexity}, $\cP_{11}$ in~\eqref{eq: P_11} is invertible since it can be bounded as follows:
\begin{equation}
\label{eq: bounded P11}
0<\frac{1}{N}\left(\sum_{k=1}^N\lambda_{k,\min}\right)I_M\leq\cP_{11}\leq\frac{1}{N}\left(\sum_{k=1}^N\lambda_{k,\max}\right)I_M.
\end{equation}
Applying} the block inversion formula to $\cP$, we obtain:
\begin{equation}
\label{eq: inverse of cP}
\cP^{-1}=\left[\begin{array}{cc}
\cP_{11}^{-1}+\cP_{11}^{-1}\cP_{12}\cT\cP_{21}\cP_{11}^{-1}&-\cP_{11}^{-1}\cP_{12}\cT\\
-\cT\cP_{21}\cP_{11}^{-1}&\cT
\end{array}
\right],
\end{equation}
with $\cT$ defined in~\eqref{eq: cT}. Replacing~\eqref{eq: inverse of cP} into~\eqref{eq: GFT of the bias}, and using~\eqref{eq: definition cJ} and~\eqref{eq: GFT of w o eta 2}, we conclude~\eqref{eq: wb infty}.

Our goal now is to show that
\begin{equation}
\lim_{\mu\rightarrow 0}\frac{\|\cw^o_{\eta}-\cw_{\infty}\|}{\mu}=c,
\end{equation}
for some constant $c$ that may depend on $\eta$ (the regularization strength), but not on $\mu$ (the step-size parameter). From~\eqref{eq: wb infty}, we have:
\begin{equation}
\label{eq: goal for the bias}
\lim_{\mu\rightarrow 0}\frac{\|\cw^o_{\eta}-\cw_{\infty}\|}{\mu}=\eta^2\lim_{\mu\rightarrow 0}\left\|\left[
\begin{array}{c}
-\cP_{11}^{-1}\cP_{12}\cT\\
\cT
\end{array}
\right](\Lambda_o^2\otimes I_M)\cK[\cwb^o]_{2:N}
\right\|.
\end{equation}
{Since the Euclidean norm is continuous, we have $\lim_{\mu\rightarrow 0}\| g(\mu)\|=\|\lim_{\mu\rightarrow 0}g(\mu)\|$.} In the following we show that
\begin{equation}
\label{eq: bound 1 on the bias}
\eta^4\left\|\lim_{\mu\rightarrow 0}\cT(\Lambda_o^2\otimes I_M)\cK[\cwb^o]_{2:N}\right\|^2\leq {O(\eta^4)(O(1)+O(\eta))^{-4}},
\end{equation}
and
\begin{equation}
\label{eq: bound 2 on the bias}
\eta^4\left\|\lim_{\mu\rightarrow 0}\cP_{11}^{-1}\cP_{12}\cT(\Lambda_o^2\otimes I_M)\cK[\cwb^o]_{2:N}\right\|^2\leq O(\eta^4)(O(1)+O(\eta))^{-4}.
\end{equation}
From~\eqref{eq: goal for the bias},~\eqref{eq: bound 1 on the bias}, and~\eqref{eq: bound 2 on the bias}, we can conclude~\eqref{eq: bound on the bias}.

%

Let us first establish~\eqref{eq: bound 1 on the bias}. We have:
\begin{equation}
\label{eq: w12 Gmma 2}
\|\cT(\Lambda_o^2\otimes I_M)\cK[\cwb^o]_{2:N}\|^2\leq\|\cT\|^2\|\Lambda_o^2\|^2\|\cK\|^2\|[\cwb^o]_{2:N}\|^2.
\end{equation}
From~\eqref{eq: order of K 22}, we have $\|\cK\|^2\leq(O(1)+O(\eta))^{-2}$. For sufficiently small step-sizes, we have:
\begin{align}
&\lim_{\mu\rightarrow 0}\cP_{21}=(V_R^\top\otimes I_M)\cH_{\infty}(v_1\otimes I_M),\\
&\lim_{\mu\rightarrow 0}\cP_{22}=\eta\Lambda_o\otimes I_M+(V_R^\top\otimes I_M)\cH_{\infty}(V_R\otimes I_M).
\end{align}
Following the same line of reasoning as in~\eqref{eq: cG'}--\eqref{eq: bound relation a 2}, we can show that, when $\mu\rightarrow0$, we have:
\begin{equation}
\label{eq: order W22}
\|\cT\|^2\leq\left(\eta\lambda_2(L)+\min_{1\leq k \leq N}\lambda_{k,\min}\right)^{-2}=(O(1)+O(\eta))^{-2}.
\end{equation}
Thus, we conclude~\eqref{eq: bound 1 on the bias}.

Now, we establish~\eqref{eq: bound 2 on the bias}.   From~\eqref{eq: bounded P11}, we have $\cP_{11}=O(1)$ and $\|\cP_{11}^{-1}\|^2=O(1)$. Similarly, we can conclude from~\eqref{eq: P_12} that $\|\cP_{12}\|^2\leq O(1)$. Thus, using~\eqref{eq: bound 1 on the bias}, we arrive at~\eqref{eq: bound 2 on the bias}.

\section{Proof of Theorem~\ref{theo: dimension of the MSP}}
\label{app: proof of dimension of the MSP}
\noindent From~\eqref{eq: gradient noise process},~\eqref{eq: network vector recursion}, and~\eqref{eq: recursion 1}, we have:
\begin{equation}
\label{eq: ss bias of the algorithm}
\cw_{\infty}-\bcw_i=(I_{MN}-\mu\eta\cL)\left(\cw_{\infty}-\bcw_{i-1}-\mu\,\col\left\{\nabla_{w_k}J_k(w_{k,\infty})-\nabla_{w_k}J_k(\bw_{k,i-1})\right\}_{k=1}^N-\mu\,\col\left\{\bs_{k,i}(\bw_{k,i-1})\right\}_{k=1}^N\right).
\end{equation}
Using the mean-value theorem~\eqref{eq: mean value theorem}, the above relation can be written as:
\begin{equation}
\label{eq: ss bias of the algorithm 1}
\cw_{\infty}-\bcw_i=(I_{MN}-\mu\eta\cL)\left((I_{MN}-\mu\bcH_{i-1})(\cw_{\infty}-\bcw_{i-1})-\mu\,\col\{\bs_{k,i}(\bw_{k,i-1})\}_{k=1}^N\right),
\end{equation}
where $\bcH_{i-1}\triangleq\diag\{\bH_{1,i-1},\ldots,\bH_{N,i-1}\}$ with:
\begin{equation}
\label{eq: H_k i-1}
\bH_{k,i-1}\triangleq\int_{0}^1\nabla^2_{w_k}J_k(w_{k,\infty}-t(w_{k,\infty}-\bw_{k,i-1}))dt
\end{equation}
Let
\begin{align}
\label{eq: equation for bphi}\bphi_i&\triangleq(I_{MN}-\mu\bcH_{i-1})(\cw_{\infty}-\bcw_{i-1})-\mu\,\col\{\bs_{k,i}(\bw_{k,i-1})\}_{k=1}^N\\
C&\triangleq I_{N}-\mu\eta L.\label{eq: combination matrix C}
\end{align}
From the Laplacian matrix definition, it can be verified that the off-diagonal entries of the matrix $C$ are non-negative and that its diagonal entries are non-negative under condition~\eqref{eq: condition for positivity}. Furthermore, since we have $L\mathds{1}_N=0$, the entries on each row of $C$ will add up to one. Thus, applying Jensen's inequality~\cite[pp.~77]{boyd2004convex} {to} the convex function $\|\cdot\|^2$, we obtain from~\eqref{eq: ss bias of the algorithm 1} and~\eqref{eq: equation for bphi}:
\begin{equation}
\label{eq: relation by Jensen's}
\expec\|w_{k,\infty}-\bw_{k,i}\|^2\leq\sum_{\ell=1}^N[C]_{k\ell}\expec\|\bphi_{\ell,i}\|^2,
\end{equation}
where $\bphi_{k,i}$ is the $k$-th sub-vector of $\bphi_i$ given by:
\begin{equation}
\label{eq: equation for bphi_k}
\bphi_{k,i}=(I_M-\mu\bH_{k,i-1})(w_{k,\infty}-\bw_{k,i-1})-\mu\bs_{k,i}(\bw_{k,i-1}).
\end{equation}
{Squaring both sides of~\eqref{eq: equation for bphi_k}, conditioning on $\bcF_{i-1}$, and taking expectations we obtain:
\begin{equation}
\label{eq: relation conditioned}
\begin{split}
\expec[\|\bphi_{k,i}\|^2|\bcF_{i-1}]&=\|w_{k,\infty}-\bw_{k,i-1}\|^2_{\bSig_{k,i-1}}+\mu^2\expec[\|\bs_{k,i}(\bw_{k,i-1})\|^2|\bcF_{i-1}].
\end{split}
\end{equation}
where $\bSig_{k,i-1}\triangleq(I_M-\mu\bH_{k,i-1})^2$ and where the cross term is zero because of the zero-mean condition~\eqref{eq: mean gradient noise condition}}. Due to Assumption~\ref{assumption: strong convexity}, $\bSig_{k,i-1}$ can be bounded as follows:
\begin{equation}
\label{eq: bound on sigma_k, i-1}
0<\bSig_{k,i-1}\leq\gamma_k^2I_M,
\end{equation}
where $\gamma_k$ is given by~\eqref{eq: gamma_k}. From Assumption~\ref{assumption: gradient noise}, $\expec[\|\bs_{k,i}(\bw_{k,i-1})\|^2|\bcF_{i-1}]$ can be bounded as follows:
\begin{align}
\expec[\|\bs_{k,i}(\bw_{k,i-1})\|^2|\bcF_{i-1}]&\leq\beta_k^2\|\bw_{k,i-1}\|^2+\sigma^2_{s,k}\nonumber\\
&=\beta_k^2\|w^o_{k,\eta}-w_{k,\infty}+w_{k,\infty}-\bw_{k,i-1}-w^o_{k,\eta}\|^2+\sigma^2_{s,k}\nonumber\\
&\leq3\beta_k^2\|w^o_{k,\eta}-w_{k,\infty}\|^2+3\beta_k^2\|w_{k,\infty}-\bw_{k,i-1}\|^2+3\beta_k^2\|w^o_{k,\eta}\|^2+\sigma^2_{s,k}.\label{eq: gradient noise bounds second-moment}
\end{align}
Taking expectation again in~\eqref{eq: relation conditioned}, and using the bounds~\eqref{eq: bound on sigma_k, i-1} and~\eqref{eq: gradient noise bounds second-moment}, we obtain:
\begin{align}
\expec\|\bphi_{k,i}\|^2&=\expec\|w_{k,\infty}-\bw_{k,i-1}\|^2_{\bSig_{k,i-1}}+\mu^2\expec\|\bs_{k,i}(\bw_{k,i-1})\|^2\nonumber\\
&\leq(\gamma_k^2+3\mu^2\beta^2_k)\expec\|w_{k,\infty}-\bw_{k,i-1}\|^2+\mu^2\left(3\beta_k^2\|w^o_{k,\eta}-w_{k,\infty}\|^2+3\beta_k^2\|w^o_{k,\eta}\|^2+\sigma^2_{s,k}\right).\label{eq: relation unconditioned}
\end{align}
Now, combining~\eqref{eq: relation unconditioned} and~\eqref{eq: relation by Jensen's}, we obtain~\eqref{eq: evolution of the MSP i}.

Iterating~\eqref{eq: evolution of the MSP i} starting from $i=1$, we get:
\begin{equation}
\label{eq: recursion MSP 1}
\text{MSP}_i \preceq (CG)^i\text{MSP}_{0}+\mu^2\sum_{j=0}^{i-1}(CG)^jCb.
\end{equation}
Under Assumption~\ref{assumption: combination matrix} and condition~\eqref{eq: MSP stability}, the matrix $CG$ can be guaranteed to be stable. To see this, we upper bound the spectral radius as follows:
\begin{equation}
\rho(CG)\leq\|CG\|_{\infty}\leq\|C\|_{\infty}\|G\|_{\infty}=\|G\|_{\infty}=\max_{1\leq k\leq N}\gamma_k^2+3\mu^2\beta_k^2,
\end{equation}
where we used the fact that, under condition~\eqref{eq: condition for positivity}, the matrix $C$ is a right-stochastic matrix. {We have:
\begin{equation}
\label{eq: intermediate relation}
\gamma_k^2+3\mu^2\beta_k^2=\max\{1-2\mu\lambda_{k,\min}+\mu^2\lambda_{k,\min}^2+3\mu^2\beta_k^2,1-2\mu\lambda_{k,\max}+\mu^2\lambda_{k,\max}^2+3\mu^2\beta_k^2\},
\end{equation}
which is guaranteed to be less than one when:
\begin{equation}
0<\mu<\min\left\{\frac{2\lambda_{k,\min}}{\lambda_{k,\min}^2+3\beta_k^2},\frac{2\lambda_{k,\max}}{\lambda_{k,\max}^2+3\beta_k^2}\right\}.
\end{equation}
Then we conclude} that the matrix $CG$ is stable under condition~\eqref{eq: MSP stability}. In this case, we have:
\begin{equation}
\limsup_{i\rightarrow\infty}\text{MSP}_i\preceq\mu^2\sum_{j=0}^{\infty}(CG)^jCb.
\end{equation}
Using the submultiplicative property of the induced infinity norm, {we obtain:
\begin{equation}
\label{eq: limsup}
\begin{split}
\|\limsup_{i\rightarrow\infty}\text{MSP}_i\|_{\infty}&\leq\mu^2\left\|\sum_{j=0}^{\infty}(CG)^j\right\|_{\infty}\|C\|_{\infty}\|b\|_{\infty}\\
&\leq\mu^2\sum_{j=0}^{\infty}\|(CG)^j\|_{\infty}\|b\|_{\infty},\\
&\leq\mu^2\sum_{j=0}^{\infty}\|C\|^j_{\infty}\|G\|_{\infty}^j\|b\|_{\infty}=\frac{\mu^2\|b\|_{\infty}}{1-\|G\|_{\infty}},
\end{split}
\end{equation}
where} we used the fact that $\|C\|_{\infty}=1$ and where $\|G\|_{\infty}=\max_{1\leq k\leq N}\gamma_k^2+3\mu^2\beta_k^2$. {From~\eqref{eq: intermediate relation}, we have:
\begin{equation}
\gamma_k^2+3\mu^2\beta_k^2=1-\mu\zeta_k,
\end{equation}
where
\begin{equation}
\zeta_k\triangleq\min\{2\lambda_{k,\min}-\mu\lambda_{k,\min}^2-3\mu\beta_k^2, 2\lambda_{k,\max}-\mu\lambda_{k,\max}^2-3\mu\beta_k^2\}.
\end{equation}
Thus,
\begin{equation}
\|G\|_{\infty}=\max_{1\leq k\leq N}\{1-\mu\zeta_k\}=1-\mu\min_{1\leq k\leq N}\zeta_k.
\end{equation}
Substituting into~\eqref{eq: limsup}, we obtain:
\begin{equation}
\|\limsup_{i\rightarrow\infty}\text{MSP}_i\|_{\infty}\leq\frac{\mu\|b\|_{\infty}}{\min_{1\leq k\leq N}\zeta_k}.
\end{equation}
For} sufficiently small $\mu$, we have from~\eqref{eq: equation of b} and Theorem~\ref{theo: dimension of the bias} that $\|b\|=O(1)+O(\mu^2\eta^4)(O(1)+O(\eta))^{-4}$. We conclude that $\|\limsup_{i\rightarrow\infty}\text{MSP}_i\|_{\infty}{\leq} O(\mu)$.

From~\eqref{eq: bound on mean-square expectation}, we have:
\begin{align}
\limsup_{i\rightarrow\infty}\expec\|\cw^o_{\eta}-\bcw_i\|^2&\leq2\|\cw^o_{\eta}-\cw_{\infty}\|^2+2\limsup_{i\rightarrow\infty}\mathds{1}_N^\top\cdot\text{MSP}_i
\end{align}
Therefore, from Theorem~\ref{theo: dimension of the bias} and~\eqref{eq: steady-state of the MSP}, we conclude~\eqref{eq: steady-state of the second order}.
\section{Proof of Theorem~\ref{theo: dimension of the MFP}}
\label{app: proof of dimension of the MFP}
\noindent Applying Jensen's inequality~\cite[pp.~77]{boyd2004convex} {to} the convex function $\|\cdot\|^4$, we obtain from~\eqref{eq: ss bias of the algorithm 1} and~\eqref{eq: equation for bphi}:
\begin{equation}
\label{eq: relation by Jensen's 4}
\expec\|w_{k,\infty}-\bw_{k,i}\|^4\leq\sum_{\ell=1}^N[C]_{k\ell}\expec\|\bphi_{\ell,i}\|^4,
\end{equation}
where $C$ and $\bphi_{k,i}$ are given by~\eqref{eq: combination matrix C} and~\eqref{eq: equation for bphi_k}, respectively. Using the inequality~\cite[pp.~523]{sayed2014adaptation}:
\begin{equation}
\|a+b\|^4\leq\|a\|^4+3\|b\|^4+8\|a\|^2\|b\|^2+4\|a\|^2(a^{\top}b),
\end{equation}
we obtain from~\eqref{eq: equation for bphi_k} under Assumption~\ref{assumption: gradient noise} on the gradient noise:
\begin{equation}
\begin{split}
\expec\|\bphi_{k,i}\|^4\leq&~\expec\|(I_M-\mu\bH_{k,i-1})(w_{k,\infty}-\bw_{k,i-1})\|^4+3\mu^4\expec\|\bs_{k,i}(\bw_{k,i-1})\|^4+\\
&\quad8\mu^2\left(\expec\|(I_M-\mu\bH_{k,i-1})(w_{k,\infty}-\bw_{k,i-1})\|^2\right)\left(\expec\|\bs_{k,i}(\bw_{k,i-1})\|^2\right).
\end{split}
\end{equation}
From Assumption~\ref{assumption: strong convexity}, the matrices $(I_M-\mu\bH_{k,i-1})^2$ and $(I_M-\mu\bH_{k,i-1})^4$ can be bounded as follows:
\begin{equation}
0<(I_M-\mu\bH_{k,i-1})^2\leq\gamma_k^2I_M,
\end{equation}
\begin{equation}
0<(I_M-\mu\bH_{k,i-1})^4\leq\gamma_k^4I_M,
\end{equation}
where $\gamma_k$ is given by~\eqref{eq: gamma_k}. Thus, we obtain:
\begin{equation}
\label{eq: expec phi 4}
\expec\|\bphi_{k,i}\|^4\leq\gamma_k^4\expec\|w_{k,\infty}-\bw_{k,i-1}\|^4+3\mu^4\expec\|\bs_{k,i}(\bw_{k,i-1})\|^4+8\mu^2\gamma_k^2\left(\expec\|w_{k,\infty}-\bw_{k,i-1}\|^2\right)\left(\expec\|\bs_{k,i}(\bw_{k,i-1})\|^2\right).
\end{equation}
Under condition~\eqref{eq: condition on fourth-order moment of gradient noise}, {we have}:
\begin{align}
\expec\left[\|\bs_{k,i}(\bw_{k,i-1})\|^4|\bcF_{i-1}\right]&\leq\overline{\beta}_k^4\|\bw_{k,i-1}\|^4+\overline{\sigma}^4_{s,k}\nonumber\\
&=\overline{\beta}_k^4\|w^o_{k,\eta}-w_{k,\infty}+w_{k,\infty}-\bw_{k,i-1}-w^o_{k,\eta}\|^4+\overline{\sigma}^4_{s,k}\nonumber\\
&\leq27\overline{\beta}_k^4\|w^o_{k,\eta}-w_{k,\infty}\|^4+27\overline{\beta}_k^4\|w_{k,\infty}-\bw_{k,i-1}\|^4+27\overline{\beta}_k^4\|w^o_{k,\eta}\|^4+\overline{\sigma}^4_{s,k},\label{eq: MFP bound 1}
\end{align}
where we applied Jensen's inequality {to} the function $\|\cdot\|^4$. Furthermore, from~\eqref{eq: gradient noise bounds second-moment}, the last term on the {RHS} of~\eqref{eq: expec phi 4} can be bounded as follows:
\begin{align}
&\left(\expec\|w_{k,\infty}-\bw_{k,i-1}\|^2\right)\left(\expec\|\bs_{k,i}(\bw_{k,i-1})\|^2\right)\nonumber\\
&\leq3\beta_k^2\left(\expec\|w_{k,\infty}-\bw_{k,i-1}\|^2\right)^2+\left(3\beta_k^2\|w^o_{k,\eta}-w_{k,\infty}\|^2+3\beta_k^2\|w^o_{k,\eta}\|^2+\sigma^2_{s,k}\right)\expec\|w_{k,\infty}-\bw_{k,i-1}\|^2\nonumber\\
&\leq3\beta_k^2\expec\|w_{k,\infty}-\bw_{k,i-1}\|^4+\left(3\beta_k^2\|w^o_{k,\eta}-w_{k,\infty}\|^2+3\beta_k^2\|w^o_{k,\eta}\|^2+\sigma^2_{s,k}\right)\expec\|w_{k,\infty}-\bw_{k,i-1}\|^2,\label{eq: MFP bound 2}
\end{align}
where we used the fact that for any random variable $\ba$, we have $(\expec\ba)^2\leq\expec\ba^2$. Replacing~\eqref{eq: MFP bound 1} and~\eqref{eq: MFP bound 2} into~\eqref{eq: expec phi 4}, we obtain:
\begin{equation}
\label{eq: relation unconditioned 4}
\begin{split}
\expec\|\bphi_{k,i}\|^4\leq&~(\gamma_k^4+81\mu^4\overline{\beta}_k^4+24\mu^2\gamma_k^2\beta_k^2)\expec\|w_{k,\infty}-\bw_{k,i-1}\|^4+\\
&~8\mu^2\gamma_k^2(3\beta_k^2\|w^o_{k,\eta}-w_{k,\infty}\|^2+3\beta_k^2\|w^o_{k,\eta}\|^2+\sigma^2_{s,k})\expec\|w_{k,\infty}-\bw_{k,i-1}\|^2+\\
&~81\mu^4\overline{\beta}_k^4\|w^o_{k,\eta}-w_{k,\infty}\|^4+81\mu^4\overline{\beta}_k^4\|w^o_{k,\eta}\|^4+3\mu^4\overline{\sigma}^4_{s,k}.
\end{split}
\end{equation}
Now, combining~\eqref{eq: relation unconditioned 4} and~\eqref{eq: relation by Jensen's 4}, we arrive at~\eqref{eq: evolution of the MFP i}.

Iterating~\eqref{eq: evolution of the MFP i} starting from $i=1$, we get:
\begin{equation}
\label{eq: recursion MFP 1}
\text{MFP}_i \preceq (CG')^i\text{MFP}_{0}+\mu^2\sum_{j=0}^{i-1}(CG')^jCB\text{MSP}_{i-1-j}+\mu^4\sum_{j=0}^{i-1}(CG')^jCb'.
\end{equation}
Under Assumption~\ref{assumption: combination matrix} and for sufficiently small $\mu$, the matrix $CG'$ can be guaranteed to be stable. To see this, we upper bound its spectral radius as follows:
\begin{equation}
\rho(CG')\leq\|CG'\|_{\infty}\leq\|C\|_{\infty}\|G'\|_{\infty}=\|G'\|_{\infty},
\end{equation}
since under condition~\eqref{eq: condition for positivity}, $C$ is a right-stochastic matrix. The $\infty-$norm of $G'$ is given {by:
\begin{align}
\|G'\|_{\infty}&=\max_{1\leq k\leq N}\left\{\gamma_k^4+24\mu^2\gamma_k^2\beta_k^2+81\mu^4\overline{\beta}_k^4\right\}\nonumber\\
&=\max_{1\leq k\leq N}\big\{\max\{1-4\mu\lambda_{k,\min}+6\mu^2\lambda_{k,\min}^2-4\mu^3\lambda_{k,\min}^3+\mu^4\lambda_{k,\min}^4+24\mu^2\gamma_k^2\beta_k^2+81\mu^4\overline{\beta}_k^4,\nonumber\\
&~\quad\qquad\qquad\qquad1-4\mu\lambda_{k,\max}+6\mu^2\lambda_{k,\max}^2-4\mu^3\lambda_{k,\max}^3+\mu^4\lambda_{k,\max}^4+24\mu^2\gamma_k^2\beta_k^2+81\mu^4\overline{\beta}_k^4\}\big\}\nonumber\\
&=1-\mu\min_{1\leq k\leq N}\big\{\min\{4\lambda_{k,\min}-6\mu\lambda_{k,\min}^2+4\mu^2\lambda_{k,\min}^3-\mu^3\lambda_{k,\min}^4-24\mu\gamma_k^2\beta_k^2-81\mu^3\overline{\beta}_k^4,\nonumber \\
&~\quad\qquad\qquad\qquad\qquad4\lambda_{k,\max}-6\mu\lambda_{k,\max}^2+4\mu^2\lambda_{k,\max}^3-\mu^3\lambda_{k,\max}^4-24\mu\gamma_k^2\beta_k^2-81\mu^3\overline{\beta}_k^4\}\big\}.\label{eq: G' infty}
\end{align}
A} sufficiently small $\mu$ ensures $\|G'\|_{\infty}<1$ and, thus, ensures the stability of $CG'$.

We have established in Theorem~\ref{theo: dimension of the MSP} that, for small $\mu$, after sufficient iterations have passed, $\text{MSP}_j$ converges to a bounded region on the order of $\mu$. This implies that, there exists a $j_o$ large enough such that for all $j\geq j_o$ it holds that:
\begin{equation}
\label{eq: MSP after j_o}
\|\text{MSP}_j\|_{\infty}\leq s_{\max}=O(\mu).
\end{equation}
In this case, we have from~\eqref{eq: recursion MFP 1}:
\begin{align}
\limsup_{i\rightarrow\infty}\text{MFP}_i&\preceq\mu^4\sum_{j=0}^{\infty}(CG')^jCb'+\mu^2\limsup_{i\rightarrow\infty}\sum_{j=0}^{i-1}(CG')^jCB\text{MSP}_{i-1-j}\nonumber\\
&=\mu^4\sum_{j=0}^{\infty}(CG')^jCb'+\mu^2\limsup_{i\rightarrow\infty}\sum_{j=0}^{i-1}(CG')^{i-1-j}CB\text{MSP}_{j}\nonumber\\
&=\mu^4\sum_{j=0}^{\infty}(CG')^jCb'+\mu^2\limsup_{i\rightarrow\infty}\left(\sum_{j=0}^{j_o}(CG')^{i-1-j}CB\text{MSP}_{j}+\sum_{j=j_o+1}^{i-1}(CG')^{i-1-j}CB\text{MSP}_{j}\right)\nonumber\\
&=\mu^4\sum_{j=0}^{\infty}(CG')^jCb'+\mu^2\limsup_{i\rightarrow\infty}\left((CG')^{i}\sum_{j=0}^{j_o}(CG')^{-1-j}CB\text{MSP}_{j}+\sum_{j=j_o+1}^{i-1}(CG')^{i-1-j}CB\text{MSP}_{j}\right)\nonumber\\
&=\mu^4\sum_{j=0}^{\infty}(CG')^jCb'+\mu^2\limsup_{i\rightarrow\infty}\sum_{j=j_o+1}^{i-1}(CG')^{i-1-j}CB\text{MSP}_{j}.
\end{align}
Using the submultiplicative and sub-additive properties of the induced infinity norm, we obtain:
\begin{align}
\|\limsup_{i\rightarrow\infty}\text{MFP}_i\|_{\infty}&\leq\mu^4\left\|\sum_{j=0}^{\infty}(CG')^j\right\|_{\infty}\|b'\|_{\infty}+\mu^2\limsup_{i\rightarrow\infty}\sum_{j=j_o+1}^{i-1}\|(CG')^{i-1-j}CB\text{MSP}_{j}\|_{\infty}\nonumber\\
&\leq\mu^4\sum_{j=0}^{\infty}\|G'\|_{\infty}^j\|b'\|_{\infty}+\mu^2\limsup_{i\rightarrow\infty}\sum_{j=0}^{i-j_o-2}\|G'\|_{\infty}^{j}\|B\|_{\infty}s_{\max}\nonumber\\
&=\mu^4\frac{\|b'\|_{\infty}}{1-\|G'\|_{\infty}}+\mu^2\frac{\|B\|_{\infty}s_{\max}}{1-\|G'\|_{\infty}}\label{eq: limsup 1}
\end{align}
where in the second line we used~\eqref{eq: MSP after j_o} and where $\|G'\|_{\infty}$ is given by~\eqref{eq: G' infty}. Since $\|b'\|_{\infty}=O(1)$, $\|B\|_{\infty}=O(1)$, $s_{\max}=O(\mu)$, and $1-\|G'\|_{\infty}=O(\mu)$, we conclude~\eqref{eq: steady-state of the MFP i}.

From~\eqref{eq: fourth order bound 1} and~\eqref{eq: fourth order bound 2}, we have:
\begin{equation}
\limsup_{i\rightarrow\infty}\expec\|\cw_{\eta}^o-\bcw_{i}\|^4\leq 8\|\cw_{\eta}^o-\cw_{\infty}\|^4+8N\limsup_{i\rightarrow\infty}\mathds{1}_N\cdot\text{MFP}_i.
\end{equation}
Therefore, from Theorem~\ref{theo: dimension of the bias} and~\eqref{eq: steady-state of the MFP i}, we conclude~\eqref{eq: steady-state of the fourth order}.

\section{Proof of Theorem~\ref{theo: dimension of the SMP}}
\label{app: proof of dimension of the SMP}
\noindent Conditioning both sides of~\eqref{eq: ss bias of the algorithm 1}, invoking the conditions on the gradient noise from Assumption~\ref{assumption: gradient noise}, and computing the conditional expectations, we obtain:
\begin{equation}
\expec[(\cw_{\infty}-\bcw_i)|\bcF_{i-1}]=(I_{MN}-\mu\eta\cL)(I_{MN}-\mu\bcH_{i-1})(\cw_{\infty}-\bcw_{i-1}).
\end{equation}
Taking expectations again, we arrive at:
\begin{equation}
\expec(\cw_{\infty}-\bcw_i)=(I_{MN}-\mu\eta\cL)\expec[(I_{MN}-\mu\bcH_{i-1})(\cw_{\infty}-\bcw_{i-1})].
\end{equation}
Applying Jensen's inequality~\cite[pp.~77]{boyd2004convex} {to} the convex function $\|\cdot\|^2$, we obtain from the above relation:
\begin{equation}
\label{eq: Jensen's inequality for first order}
\|\expec (w_{k,\infty}-\bw_{k,i})\|^2\leq\sum_{\ell=1}^N[C]_{k\ell}\left\|\expec\left[(I_M-\mu\bH_{\ell,i-1}) (w_{\ell,\infty}-\bw_{\ell,i-1})\right]\right\|^2,
\end{equation}
where $C$ and $\bH_{k,i-1}$ are given by~\eqref{eq: combination matrix C} and~\eqref{eq: H_k i-1}, respectively.
Let
\begin{equation}
\widetilde{\bH}_{k,i-1}\triangleq H_{k,\eta}-\bH_{k,i-1},
\end{equation}
where
\begin{equation}
H_{k,\eta}=\nabla_{w_k}^2J_k(w_{k,\eta}^o).
\end{equation}
Then, we can write:
\begin{equation}
\expec\left[(I_M-\mu\bH_{k,i-1}) (w_{k,\infty}-\bw_{k,i-1})\right]=(I_M-\mu H_{k,\eta}) \expec(w_{k,\infty}-\bw_{k,i-1})+\mu\bc_{k,i-1},
\end{equation}
in terms of a deterministic perturbation sequence defined by
\begin{equation}
\bc_{k,i-1}\triangleq\expec[\widetilde{\bH}_{k,i-1} (w_{k,\infty}-\bw_{k,i-1})].
\end{equation}
By applying Jensen's inequality {to} the convex function $\|\cdot\|^2$, we obtain:
\begin{align}
&\|\expec(I_M-\mu\bH_{k,i-1}) (w_{k,\infty}-\bw_{k,i-1})\|^2\nonumber\\
&=\|(I_M-\mu H_{k,\eta}) \expec(w_{k,\infty}-\bw_{k,i-1})+\mu\bc_{k,i-1}\|^2\nonumber\\
&=\left\|t\frac{1}{t}(I_M-\mu H_{k,\eta})\expec(w_{k,\infty}-\bw_{k,i-1})+\mu(1-t)\frac{1}{1-t} \bc_{k,i-1}\right\|^2\nonumber\\
&\leq t\left\|\frac{1}{t}(I_M-\mu H_{k,\eta})\expec(w_{k,\infty}-\bw_{k,i-1})\right\|^2+\mu^2(1-t)\left\|\frac{1}{1-t} \bc_{k,i-1}\right\|^2\nonumber\\
&= \frac{1}{t}\|(I_M-\mu H_{k,\eta})\expec(w_{k,\infty}-\bw_{k,i-1})\|^2+\mu^2\frac{1}{1-t}\| \bc_{k,i-1}\|^2,
\end{align}
for any arbitrary positive number $t\in(0,1)$. We select $t=\gamma_k$ where $\gamma_k$ is given by~\eqref{eq: gamma_k}, which is guaranteed to be less than one under condition~\eqref{eq: condition 1}. From Assumption~\ref{assumption: strong convexity}, we have $\|I_M-\mu H_{k,\eta}\|^2\leq\gamma^2_k$. Thus, we obtain:
\begin{equation}
\label{eq: bound on square norm phi k}
\|\expec(I_M-\mu\bH_{k,i-1}) (w_{k,\infty}-\bw_{k,i-1})\|^2\leq \gamma_k\|\expec(w_{k,\infty}-\bw_{k,i-1})\|^2+\mu^2\frac{1}{1-\gamma_k}\| \bc_{k,i-1}\|^2.
\end{equation}
As shown in~\cite[Appendix~E]{sayed2014adaptation}, the Hessian of a twice differentiable strongly convex function $J_{k}(w_k)$ satisfying Assumptions~\ref{assumption: strong convexity} and~\ref{assumption: local smoothness hessian} is globally Lipschitz relative to $w^o_{k,\eta}$, namely, it satisfies:
\begin{equation}
\label{eq: global Lipschitz condition}
\|\nabla^2_{w_k}J_k(w_k)-\nabla^2_{w_k}J_k(w^o_{k,\eta})\|\leq \kappa'_d\|w_k-w^o_{k,\eta}\|,\quad\forall w_k,
\end{equation}
where $\kappa'_d=\max\{\kappa_d,\frac{\lambda_{k,\max}-\lambda_{k,\min}}{\epsilon}\}$. Then, for each agent $k$ we obtain:
\begin{align}
\|\widetilde{\bH}_{k,i-1}\|&\triangleq\|H_{k,\eta}-\bH_{k,i-1}\|\nonumber\\
&\leq\int_{0}^1\left\|\nabla_{w_k}^2J_k(w_{k,\eta}^o)-\nabla_{w_k}^2J_k(w_{k,\infty}-t(w_{k,\infty}-\bw_{k,i-1}))\right\|dt\nonumber\\
&\leq\int_{0}^1\kappa'_d\|w_{k,\eta}^o-w_{k,\infty}+t(w_{k,\infty}-\bw_{k,i-1})\|dt\nonumber\\
&\leq\int_{0}^1\kappa'_d\|w_{k,\eta}^o-w_{k,\infty}\|dt+\int_{0}^1\kappa'_d\|t(w_{k,\infty}-\bw_{k,i-1})\|dt\nonumber\\
&=\kappa'_d\|w_{k,\eta}^o-w_{k,\infty}\|+\frac{1}{2}\kappa'_d\|w_{k,\infty}-\bw_{k,i-1}\|,
\end{align}
{and, hence,}
\begin{align}
\|\bc_{k,i-1}\|&=\|\expec[\widetilde{\bH}_{k,i-1} (w_{k,\infty}-\bw_{k,i-1})]\|\nonumber\\
&\leq\expec[\|\widetilde{\bH}_{k,i-1}\|\|w_{k,\infty}-\bw_{k,i-1}\|]\nonumber\\
&\leq\kappa'_d\|w_{k,\eta}^o-w_{k,\infty}\|\expec\|w_{k,\infty}-\bw_{k,i-1}\|+\frac{1}{2}\kappa'_d\expec\|w_{k,\infty}-\bw_{k,i-1}\|^2.\label{eq: norm of c k}
\end{align}
where we used the stochastic version of Jensen's inequality:
\begin{equation}
f(\expec\ba)\leq\expec(f(\ba))
\end{equation}
when $f(x)\in\mathbb{R}$ is convex.
Applying Jensen's inequality {to} the convex function $\|\cdot\|^2$ and using the fact that $(\expec\ba)^2\leq\expec\ba^2$ for any real-valued random variable $\ba$, we obtain from~\eqref{eq: norm of c k}:
\begin{align}
\|\bc_{k,i-1}\|^2&\leq 2(\kappa'_d)^2\|w_{k,\eta}^o-w_{k,\infty}\|^2\left(\expec\|w_{k,\infty}-\bw_{k,i-1}\|\right)^2+{2\frac{1}{4}}(\kappa'_d)^2\left(\expec\|w_{k,\infty}-\bw_{k,i-1}\|^2\right)^2\nonumber\\
&\leq2(\kappa'_d)^2\|w_{k,\eta}^o-w_{k,\infty}\|^2\expec\|w_{k,\infty}-\bw_{k,i-1}\|^2+\frac{1}{2}(\kappa'_d)^2\expec\|w_{k,\infty}-\bw_{k,i-1}\|^4.
\end{align}
From~\eqref{eq: Jensen's inequality for first order} and using the above bound in~\eqref{eq: bound on square norm phi k}, we conclude~\eqref{eq: evolution of the SMP i}.

Iterating~\eqref{eq: evolution of the SMP i} starting from $i=1$, we obtain:
\begin{equation}
\text{SMP}_i\preceq (CG'')^i\text{SMP}_{0}+\mu^2\frac{1}{2}(\kappa'_d)^2\sum_{j=0}^{i-1}(CG'')^jC(I-G'')^{-1}\text{MFP}_{i-1-j}+\mu^2\sum_{j=0}^{i-1}(CG'')^jC(I-G'')^{-1}B'\text{MSP}_{i-1-j}.
\end{equation}
Under Assumption~\ref{assumption: combination matrix} and condition~\eqref{eq: condition 1}, the matrix $CG''$ is guaranteed to be stable.  From~\eqref{eq: steady-state of the MSP},~\eqref{eq: steady-state of the MFP i},~and following similar arguments as the {ones} used to establish~\eqref{eq: steady-state of the MFP i} in Appendix~\ref{app: proof of dimension of the MFP}, we conclude that
\begin{equation}
\|\limsup_{i\rightarrow\infty}\text{SMP}_i\|_{\infty}=O(\mu^2)+\frac{O(\mu^3\eta^4)}{(O(1)+O(\eta))^4}=O(\mu^2),
\end{equation}
where we used the fact that $\|B'\|_{\infty}\leq O(\mu^2\eta^4)/(O(1)+O(\eta))^4$ from Theorem~\ref{theo: dimension of the bias} and $\|(I_N-G'')^{-1}\|_{\infty}\leq O(\mu^{-1})$.

Using~\eqref{eq: triangle inequality} and since $\|\expec(\cw_{\infty}-\bcw_{i})\|^2=\mathds{1}_N\cdot\text{SMP}_i$, we conclude~\eqref{eq: steady-state of the first order} from Theorem~\ref{theo: dimension of the bias} and~\eqref{eq: steady-state of the SMP i}.

\end{appendices}

\bibliographystyle{IEEEbib}
\bibliography{reference}

\begin{thebibliography}{10}

\bibitem{nassif2018distributed}
R.~Nassif, S.~Vlaski, and A.~H. Sayed,
\newblock ``Distributed inference over multitask graphs under smoothness,''
\newblock in {\em Proc. IEEE International Workshop on Signal Processing
  Advances in Wireless Communications}, Kalamata, Greece, Jun. 2018.

\bibitem{bertsekas1997new}
D.~P. Bertsekas,
\newblock ``A new class of incremental gradient methods for least squares
  problems,''
\newblock {\em SIAM J. Optim.}, vol. 7, no. 4, pp. 913--926, 1997.

\bibitem{olfati2007consensus}
R.~Olfati-Saber, J.~A. Fax, and R.~M. Murray,
\newblock ``Consensus and cooperation in networked multi-agent systems,''
\newblock {\em Proc. IEEE}, vol. 95, no. 1, pp. 215--233, 2007.

\bibitem{dimakis2010gossip}
A.~G. Dimakis, S.~Kar, J.~M.~F. Moura, M.~G. Rabbat, and A.~Scaglione,
\newblock ``Gossip algorithms for distributed signal processing,''
\newblock {\em Proc. IEEE}, vol. 98, no. 11, pp. 1847--1864, 2010.

\bibitem{ram2010distributed}
S.~S. Ram, A.~Nedi{\'c}, and V.~V. Veeravalli,
\newblock ``Distributed stochastic subgradient projection algorithms for convex
  optimization,''
\newblock {\em J. Optim. Theory Appl.}, vol. 147, no. 3, pp. 516--545, 2010.

\bibitem{chen2013distributed}
J.~Chen and A.~H. Sayed,
\newblock ``Distributed {P}areto optimization via diffusion strategies,''
\newblock {\em IEEE J. Sel. Topics Signal Process.}, vol. 7, no. 2, pp.
  205--220, 2013.

\bibitem{sayed2014adaptation}
A.~H. Sayed,
\newblock ``Adaptation, learning, and optimization over networks,''
\newblock {\em Foundations and Trends in Machine Learning}, vol. 7, no. 4-5,
  pp. 311--801, 2014.

\bibitem{chen2015learning}
J.~Chen and A.~H. Sayed,
\newblock ``On the learning behavior of adaptive networks -- {P}art {I}:
  {T}ransient analysis,''
\newblock {\em IEEE Trans. Inf. Theory}, vol. 61, no. 6, pp. 3487--3517, Jun.
  2015.

\bibitem{chen2015learning2}
J.~Chen and A.~H. Sayed,
\newblock ``On the learning behavior of adaptive networks -- {P}art {II}:
  {P}erformance analysis,''
\newblock {\em IEEE Trans. Inf. Theory}, vol. 61, no. 6, pp. 3518--3548, Jun.
  2015.

\bibitem{sayed2014adaptive}
A.~H. Sayed,
\newblock ``Adaptive networks,''
\newblock {\em Proc. IEEE}, vol. 102, no. 4, pp. 460--497, Apr. 2014.

\bibitem{vlaski2016diffusion}
S.~Vlaski, L.~Vandenberghe, and A.~H. Sayed,
\newblock ``Diffusion stochastic optimization with non-smooth regularizers,''
\newblock in {\em Proc. Int. Conf. Acoust., Speech, Signal Process.}, Shanghai,
  China, Mar. 2016, pp. 4149--4153.

\bibitem{platachaves2017heterogeneous}
J.~Plata-Chaves, A.~Bertrand, M.~Moonen, S.~Theodoridis, and A.~M. Zoubir,
\newblock ``Heterogeneous and multitask wireless sensor networks --
  {A}lgorithms, applications, and challenges,''
\newblock {\em IEEE J. Sel. Topics Signal Process.}, vol. 11, no. 3, pp.
  450--465, Apr. 2017.

\bibitem{chen2014multitask}
J.~Chen, C.~Richard, and A.~H. Sayed,
\newblock ``Multitask diffusion adaptation over networks,''
\newblock {\em IEEE Trans. Signal Process.}, vol. 62, no. 16, pp. 4129--4144,
  2014.

\bibitem{nassif2016proximal}
R.~Nassif, C.~Richard, A.~Ferrari, and A.~H. Sayed,
\newblock ``Proximal multitask learning over networks with sparsity-inducing
  coregularization,''
\newblock {\em IEEE Trans. Signal Process.}, vol. 64, no. 23, pp. 6329--6344,
  2016.

\bibitem{cao2017decentralized}
X.~Cao and K.~J.~R. Liu,
\newblock ``Decentralized sparse multitask {RLS} over networks,''
\newblock {\em {IEEE} Trans. Signal Process.}, vol. 65, no. 23, pp. 6217--6232,
  2017.

\bibitem{eksin2012distributed}
C.~Eksin and A.~Ribeiro,
\newblock ``Distributed network optimization with heuristic rational agents,''
\newblock {\em IEEE Trans. Signal Process.}, vol. 60, no. 10, pp. 5396--5411,
  Oct. 2012.

\bibitem{hallac2015network}
D.~Hallac, J.~Leskovec, and S.~Boyd,
\newblock ``Network {L}asso: {C}lustering and optimization in large graphs,''
\newblock in {\em Proc. ACM SIGKDD}, Sydney, Australia, Aug. 2015, pp.
  387--396.

\bibitem{kekatos2013distributed}
V.~Kekatos and G.~B. Giannakis,
\newblock ``Distributed robust power system state estimation,''
\newblock {\em IEEE Trans. Signal Process.}, vol. 28, no. 2, pp. 1617--1626,
  2013.

\bibitem{platachaves2015distributed}
J.~Plata-Chaves, N.~Bogdanovi{\'c}, and K.~Berberidis,
\newblock ``Distributed diffusion-based {LMS} for node-specific adaptive
  parameter estimation,''
\newblock {\em {IEEE} Trans. Signal Process.}, vol. 63, no. 13, pp. 3448--3460,
  2015.

\bibitem{alghunaim2017decentralized}
S.~A. Alghunaim, K.~Yuan, and A.~H. Sayed,
\newblock ``Decentralized exact coupled optimization,''
\newblock in {\em Proc. Ann. Allerton Conf. on Communication, Control, and
  Computing}, Illinois, USA, 2017, pp. 338--345.

\bibitem{nassif2017diffusion}
R.~Nassif, C.~Richard, A.~Ferrari, and A.~H. Sayed,
\newblock ``Diffusion {LMS} for multitask problems with local linear equality
  constraints,''
\newblock {\em {IEEE} Trans. Signal Process.}, vol. 65, no. 19, pp. 4979--4993,
  2017.

\bibitem{chen2014diffusion}
J.~Chen, C.~Richard, A.~O. Hero, and A.~H. Sayed,
\newblock ``Diffusion {LMS} for multitask problems with overlapping hypothesis
  subspaces,''
\newblock in {\em Proc. IEEE Int. Workshop Mach. Learn. Signal Process.},
  Reims, France, Sep. 2014, IEEE, pp. 1--6.

\bibitem{nassif2018diffusion}
R.~Nassif, S.~Vlaski, C.~Richard, and A.~H. Sayed,
\newblock ``Learning over multitask graphs -- {P}art {II}: {P}erformance
  analysis,''
\newblock {\em Submitted for publication}, Nov. 2019.

\bibitem{zhou2004regularization}
D.~Zhou and B.~Sch{\"o}lkopf,
\newblock ``A regularization framework for learning from graph data,''
\newblock in {\em Proc. ICML Workshop on Statistical Relational Learning and
  Its Connections to Other Fields}, 2004, vol.~15, pp. 67--68.

\bibitem{shuman2013emerging}
D.~I. Shuman, S.~K. Narang, P.~Frossard, A.~Ortega, and P.~Vandergheynst,
\newblock ``The emerging field of signal processing on graphs: {E}xtending
  high-dimensional data analysis to networks and other irregular domains,''
\newblock {\em IEEE Signal Process. Mag.}, vol. 30, no. 3, pp. 83--98, May
  2013.

\bibitem{chung1997spectral}
F.~R.~K. Chung,
\newblock {\em Spectral {G}raph {T}heory},
\newblock American Mathematical Society, 1997.

\bibitem{weinberger2007graph}
K.~Q. Weinberger, F.~Sha, Q.~Zhu, and L.~K. Saul,
\newblock ``Graph {L}aplacian regularization for large-scale semidefinite
  programming,''
\newblock in {\em Advances in Neural Information Processing Systems},
  Vancouver, Canada, Dec. 2007, pp. 1489--1496.

\bibitem{ortega2018graph}
A.~Ortega, P.~Frossard, J.~Kova{\v{c}}evi{\'c}, J.~M.~F. Moura, and
  P.~Vandergheynst,
\newblock ``Graph signal processing: {O}verview, challenges, and
  applications,''
\newblock {\em Proc. IEEE}, vol. 106, no. 5, pp. 808--828, 2018.

\bibitem{tsitsvero2016signals}
M.~Tsitsvero, S.~Barbarossa, and P.~Di~Lorenzo,
\newblock ``Signals on graphs: {U}ncertainty principle and sampling,''
\newblock {\em {IEEE} Trans. Signal Process.}, vol. 64, no. 18, pp. 4845--4860,
  2016.

\bibitem{sandryhaila2013discrete}
A.~Sandryhaila and J.~M.~F. Moura,
\newblock ``Discrete signal processing on graphs,''
\newblock {\em IEEE Trans. Signal Process.}, vol. 61, no. 7, pp. 1644--1656,
  Apr. 2013.

\bibitem{chen2014signal}
S.~Chen, A.~Sandryhaila, J.~M.~F. Moura, and J.~Kovacevic,
\newblock ``Signal denoising on graphs via graph filtering,''
\newblock in {\em Proc. IEEE Glob. Conf. Signal Information Process.,},
  Atlanta, GA, USA, Dec. 2014, pp. 872--876.

\bibitem{shuman2011chebyshev}
D.~I. Shuman, P.~Vandergheynst, and P.~Frossard,
\newblock ``Chebyshev polynomial approximation for distributed signal
  processing,''
\newblock in {\em Proc. IEEE Int. Conf. Dist. Comp. Sensor Syst.}, 2011, pp.
  1--8.

\bibitem{polyak1987introduction}
B.~T. Polyak,
\newblock ``Introduction to {O}ptimization,''
\newblock {\em Optimization Software, New York}, 1987.

\bibitem{kreyszig1989introductory}
E.~Kreyszig,
\newblock {\em Introductory {F}unctional {A}nalysis with {A}pplications},
\newblock John Wiley \& Sons, 1989.

\bibitem{lawrimorenoaa}
J.~H. Lawrimore, M.~J. Menne, B.~E. Gleason, C.~N. Williams, and D.~B.~Wuertz
  and,
\newblock {\em Global Historical Climatology Network--Monthly (GHCN-M)},
\newblock NOAA National Climatic Data Center. Available:
  ftp:/ftp.ncdc.noaa.gov/pub/data/gsod.

\bibitem{hosmer2013applied}
D.~W. Hosmer and S.~Lemeshow,
\newblock {\em Applied {L}ogistic {R}egression},
\newblock Wiley, NJ, 2nd edition, 2000.

\bibitem{theodoridis2008pattern}
S.~Theodoridis and K.~Koutroumbas,
\newblock {\em Pattern {R}ecognition},
\newblock Academic Press, 4th edition, 2008.

\bibitem{kailath1980linear}
T.~Kailath,
\newblock {\em Linear {S}ystems},
\newblock Prentice-Hall, Englewood Cliffs, NJ, 1980.

\bibitem{boyd2004convex}
S.~Boyd and L.~Vandenberghe,
\newblock {\em Convex {O}ptimization},
\newblock Cambridge University Press, NY, 2004.

\bibitem{horn2003matrix}
R.~A. Horn and C.~R. Johnson,
\newblock {\em Matrix {A}nalysis},
\newblock Cambridge University Press, 2nd edition, 2012.

\bibitem{smith1992some}
R.~L. Smith,
\newblock ``Some interlacing properties of the {S}chur complement of a
  {H}ermitian matrix,''
\newblock {\em Linear Algebra and its Applications}, vol. 177, pp. 137--144,
  1992.

\end{thebibliography}

\end{document}